
\documentclass[journal,10pt,twoside]{IEEEtran}

%
\ifCLASSINFOpdf
\usepackage[pdftex]{graphicx}
\else
\fi
%
%

\usepackage{mathtools}
\usepackage{cite}
\usepackage[justification=centering]{caption}
\usepackage{graphicx}
\usepackage{textcomp}
\usepackage{xcolor}
\usepackage{booktabs}
\usepackage{epstopdf}
\usepackage{amsmath,amsthm, amssymb}
\usepackage{threeparttable}
\usepackage{bm}
\usepackage{multirow}
\usepackage{booktabs}
\usepackage{color}
\usepackage{tabularx}
\usepackage{setspace}
\usepackage{verbatim}
\usepackage{stfloats}
\usepackage{caption}
\usepackage{float}
\usepackage{url}
\usepackage{subfigure}

\usepackage{stfloats}
\usepackage{amsmath}
\allowdisplaybreaks[0]

\makeatletter
\newif\if@restonecol
\makeatother

\usepackage[linesnumbered,ruled,vlined]{algorithm2e}
\usepackage{algpseudocode}
\usepackage{amsmath}

\usepackage{fancyhdr}

\setcounter{tocdepth}{4}
\setcounter{secnumdepth}{4}

\begin{document}

\title{Velocity-adaptive Access Scheme for MEC-assisted Platooning Networks: Access Fairness Via Data Freshness}

\author{Qiong Wu,~\IEEEmembership{Member,~IEEE}, Ziyang Wan, Qiang Fan, Pingyi Fan,~\IEEEmembership{Senior Member,~IEEE}  \\
and Jiangzhou Wang,~\IEEEmembership{Fellow,~IEEE}

\thanks{This work was supported in part by the National Natural Science Foundation of China under Grant No. 61701197, in part by the Beijing Natural Science Foundation under Grant No. 4202030, in part by the 111 Project under Grant No. B12018. (Corresponding authors: Qiong Wu)

Qiong Wu and Ziyang Wan are with the School of Internet of Things Engineering, Jiangnan University, Wuxi 214122, China (e-mail: qiongwu@jiangnan.edu.cn, ziyangwan@stu.jiangnan.edu.cn).

Qiang Fan is with Wistron AiEdge, San Jose, CA 95131, USA (e-mail: qiang\_fan@wistron.com).

Pingyi Fan is with the Department of Electronic Engineering, Beijing National Research Center for Information Science and Technology, Tsinghua University, Beijing 100084, China (email: fpy@tsinghua.edu.cn).

Jiangzhou Wang is with the School of Engineering and Digital Arts, University of Kent, CT2 7NT Canterbury, U.K. (e-mail: j.z.wang@kent.ac.uk).
}
\thanks{Manuscript received XXX, XX, 2015; revised XXX, XX, 2015.}}

\markboth{IEEE Internet of Things Journal,~Vol.~XX, No.~XX, XXX~2021}
{WU \MakeLowercase{\textit{et al.}}: Velocity-adaptive Access Scheme for MEC-assisted Platooning Networks: Access Fairness Via Data Freshness}

\maketitle

\begin{abstract}
Platooning strategy is an important part of autonomous driving technology. Due to the limited resource of autonomous vehicles in platoons, mobile edge computing (MEC) is usually used to assist vehicles in platoons to obtain useful information, increasing its safety. Specifically, vehicles usually adopt the IEEE 802.11 distributed coordination function (DCF) mechanism to transmit large amount of data to the base station (BS) through vehicle-to-infrastructure (V2I) communications, where the useful information can be extracted by the edge server connected to the BS and then sent back to the vehicles to make correct decisions in time. However, vehicles may be moving on different lanes with different velocities, which incurs the unfair access due to the characteristics of platoons, i.e., vehicles on different lanes transmit different amount of data to the BS when they pass through the coverage of the BS, which also results in the different amount of useful information received by various vehicles. Moreover, age of information (AoI) is an important performance metric to measure the  freshness of the data. Large average age of data implies not receiving the useful information in time. It is necessary to design an access scheme to jointly optimize the fairness and data freshness. In this paper, we formulate a joint optimization problem in the MEC-assisted V2I networks and present a multi-objective optimization scheme to solve the problem through adjusting the minimum contention window under the IEEE 802.11 DCF mode according to the velocities of vehicles. The effectiveness of the scheme has been demonstrated by simulation.
\end{abstract}

\begin{IEEEkeywords}
Platoon, MEC, access scheme, fairness, age.
\end{IEEEkeywords}

\IEEEpeerreviewmaketitle

\section{Introduction}
\label{sec1}
\IEEEPARstart{A}{utonomous} driving is a promising technology to change people's life, which is an important part of a smart city \cite{Plageras2018}. Some enterprises such as Google, Tesla and Baidu have engaged in developing advanced driverless technology. It is estimated that 25 percent of vehicles on the road will be driverless vehicles by 2035 \cite{BierstedtJ2018}.

Platooning strategy is an important management strategy for autonomous driving technology \cite{NieS2018,Zhao2021,Guo2021}. Autonomous vehicles adopt the platooning strategy to form platoons on a common lane. Each platoon is composed of a leader vehicle and some common vehicles. The leader vehicle controls the kinestate information of the entire platoon such as driving direction, velocity and acceleration. The member vehicles in the platoon follow the leader vehicle one after another with the same velocity and intra-platoon spacing \cite{JiaD2016}. In order to perceive the surrounding environment, the autonomous vehicles in the platoon are usually equipped with high-definition resolution cameras and LiDARs to collect environmental information such as the conditions of road, vehicles and pedestrians \cite{XuW2018}, and further generate high-resolution HD maps with high rate, e.g., {20-40Mbps} for high-definition cameras and {10-100Mbps} for LiDAR \cite{ChoiJ2016}. The large amount of data are usually redundant and needs to be analyzed to extract the useful information. However, the storage and computing capabilities of autonomous vehicles are usually insufficient to handle the data extraction \cite{XuW2018}. Cloud computing (CC) technology is an advanced data processing technology to solve such problem \cite{SpringerBook,1Stergiou2021}. It can be combined with big data (BD) to provide more continuous calculations \cite{Stergiou2018}. However, traditional cloud computing usually results in high latency \cite{Islam2013}, thus being unsuitable for the moving vehicle scenarios. Therefore, mobile edge computing (MEC) is employed as edge cloud (EC) to provision computing resources for vehicles with low delay \cite{2Stergiou2021}. Specifically, vehicles usually adopt the IEEE 802.11 distributed coordination function (DCF) mechanism to transmit such large amount of data to BS through vehicle-to-infrastructure (V2I) communications. As a BS is connected with an edge server that has rich sufficient computing resources, the useful information can be extracted by the edge server and then sent back to the autonomous vehicles \cite{FanQ2017}. In this way, autonomous vehicles can obtain useful information and make correct decisions in time.

Due to the unique characteristics of platoons, vehicles on the same lane has to move with the same velocity. The vehicle of high velocity will traverse the coverage area of a certain BS within a relatively short time duration as compared to other slow vehicles. Thus, the vehicle with high velocity may transmit less data to the BS than a vehicle with low velocity, incurring the unfair access under the ordinary IEEE 802.11 DCF. The unfair access will cause that the vehicles with high velocity receives less useful information than the vehicles with low velocity, and thus the probability that they cannot make correct decision in time is becoming relatively high, increasing the degree of unsafety. Therefore, it is critical to design a proper access scheme to ensure fair access. Meanwhile, age of information (AoI) is an important performance metric in vehicular network \cite{SanjitK2012_1,SanjitK2012_2}. Different from the traditional performance metrics such as latency and throughput \cite{QWu2016,HLiu2020,Jiang2019}, it represents the freshness of the data in the network. If the average age of data in the network is large, the data received by BS are averagely generated long time ago. Thus, the vehicle could not retrieve the latest useful information and make decision in time. To the best of our knowledge, no research has focused on the joint optimization of the access fairness and freshness of data in the MEC-assisted platooning network, which motivates us to conduct this work.

In this paper, we present a multi-objective optimization scheme for the MEC-assisted platooning network to ensure the fair access and relatively low average age of data through adjusting the minimum contention window according to the velocities of vehicles. The main contribution is summarized as follows.

\begin{itemize}
\item[1)] We formulate a joint optimization problem of access fairness and data freshness for V2I networks where the IEEE 802.11 DCF mode is employed and propose a new adaptive scheme, which can be implemented by adjusting the contention window according to the vehicle velocities.
\item[2)] We also define a fairness index to reflect the fairness of vehicles with different velocities. The paper shows that the fairness index can be expressed as a function of the velocity and minimum contention window.
\item[3)] The stochastic hybrid system (SHS) method is firstly adopted to model the transition of the age for a vehicle during the access process in this paper. The important finding is that it can exactly characterize the relationship between the average age of data in the network and minimum contention window.
\item[4)] For the associated multi-objective optimization problem, we adopt the multi-objective particle swarm optimization (MoPSO) algorithm to solve it and get the optimal minimum contention window for each vehicle.
\end{itemize}

The rest of this paper is organized as follows. Section \ref{sec2} reviews the related work. Section \ref{sec3} briefly describes the system model. Section \ref{sec4} introduces the fairness index and average age of data in the network. Section \ref{sec5} formulates the multi-objective optimization problem and adopts MoPSO algorithm to solve it and get the optimal contention window. We present some simulation results in Section \ref{sec6}, and conclude this paper in Section \ref{sec7}.

\section{Related Work}
\label{sec2}
In this section, we first review the existing works on the platooning network, various fairness of the network and age of information.

\subsection{Platooning network}
There have been many works to improve the performance of platooning network \cite{GeH2020,UcarS2018,XuL2019,GuoG2016,PengH2017,KazemiH2018,Campolp2017,GaoF2018}.
In \cite{GeH2020}, Wu \textit{et al.} considered the effect of the disturbance on the stability of platoons to construct an effective model to analyze the time-related performance of 802.11p-based platooning network.
In \cite{UcarS2018}, Ucar \textit{et al.} considered the security vulnerabilities of IEEE 802.11p and  visible light communication (VLC) to propose a hybrid security protocol based on IEEE 802.11p and VLC for platooning network to ensure the stability of the platoon.
In \cite{XuL2019}, Xu \textit{et al.} investigated the stability of platoons on the corners and provided an integrated platoon control framework for heterogeneous vehicles on curved roads.
In \cite{GuoG2016}, Guo \textit{et al.} considered the platoon control problem affected by capacity constraints and random packet loss to establish a framework for network access scheduling and proposed a platooning control scheme through adapting the binary sequences method.
In \cite{PengH2017}, Peng \textit{et al.} proposed a sub-channel allocation scheme and a power control mechanism for long term evolution (LTE) based vehicle-to-vehicle communication under multi-platoon scenario.
In \cite{KazemiH2018}, Kazemi \textit{et al.} proposed a neural network-based insertion detection and trajectory prediction scheme to prevent interfering vehicles from inserting into cooperative adaptive cruise control (CACC) platoon, or the leader vehicle from performing special operations such as braking.
In \cite{Campolp2017}, Campolp \textit{et al.} exploited pooled LTE resources to coordinate the platoon through the leader vehicle and proposed a solution that can meet the ultra-low latency requirements for message transmission in platoon.
In \cite{GaoF2018}, Gao \textit{et al.} proposed a distributed adaptive sliding mode control scheme to adapt to the changes of vehicle topology in the platooning control system.
However, the above works focused on the research of platooning network to solve the problem of driving control. They did not consider the effects of access fairness and age of useful information on the driving control.
\begin{figure*}[htbp]
\centering
\includegraphics[width=\linewidth, scale=1.00]{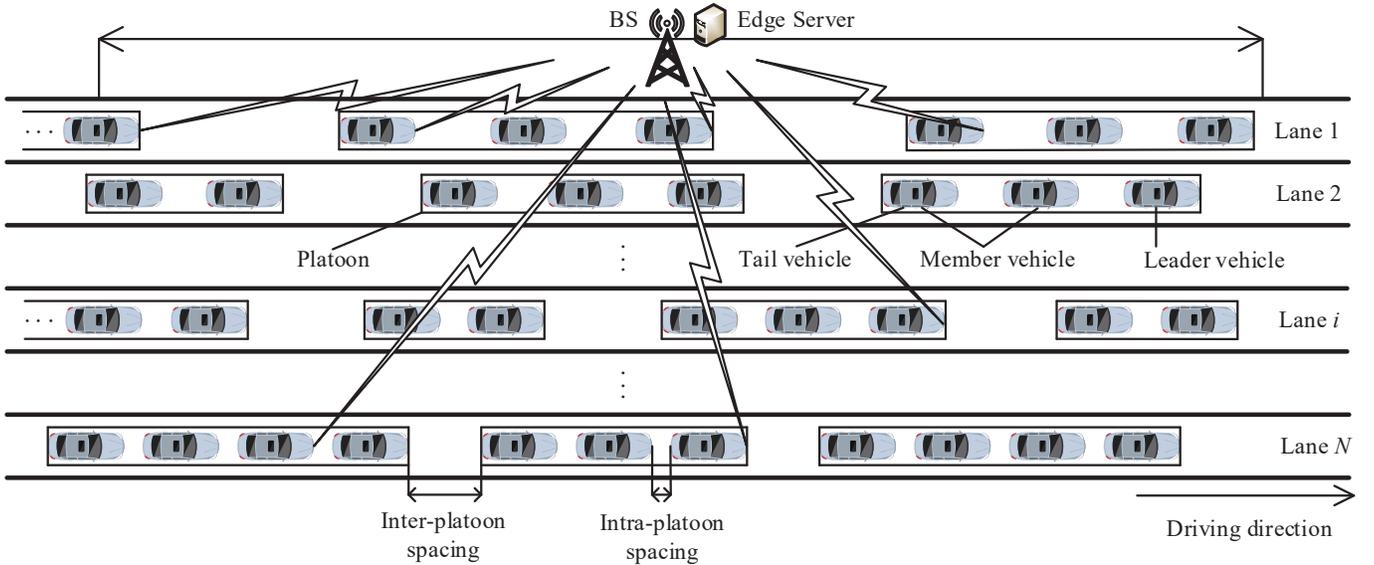}
\caption{System Model}
\label{fig1}
\vspace{-0.7cm}
\end{figure*}

\subsection{Fairness of network}
Some works have designed schemes to ensure various fairness of wireless network \cite{ChaJ2016,Ding2020,Rastegar2018,XiongK2017}.
In \cite{ChaJ2016}, Cha \textit{et al.} proposed a novel protocol to solve the unfair channel access caused by the increment of the contention window of the 802.11 protocol in the uplink wireless local area network (WLAN) with multiple packet reception (MPR).
In \cite{Ding2020}, Ding \textit{et al.} studied the fairness of the throughput generated by mobile ground users (GUs) when they communicate with unmanned aerial vehicles (UAVs) and the energy consumption of UAVs, and proposed an energy-efficient fair communication through trajectory design and band allocation (EEFC-TDBA) algorithm based on deep reinforcement learning (DRL) to autonomously choose the frequency band allocated to each mobile GU and flight direction of UAVs, thus the energy saving and fairness of throughput between GUs can be achieved.
In \cite{Rastegar2018}, Rastegar \textit{et al.} studied the fairness of users' allocation of ternary content addressable memory (TCAM) resources due to the flow table restriction in software-defined networks (SDN), and proposed an optimal two-step procedure to solve the unfair resource allocation problem.
In \cite{XiongK2017}, Xiong \textit{et al.} considered the near-far effect caused by unfair uneven power distribution in the high-speed railway communication (HSRC) scenario and proposed a new $\beta$-fair power distribution scheme to achieve the fairness of power distribution while ensuring high mobile traffic.

A few works have proposed schemes to achieve the fair V2I communication caused by different velocities of vehicles \cite{Karamad2008,Harigovindan2013,WuQ2015,XiaSY2018}.
In \cite{Karamad2008}, Karamad \textit{et al.} proposed an improved 802.11 DCF mechanism to ensure the fair V2I communication in VANETs.
In \cite{Harigovindan2013}, Harigovindan \textit{et al.} used proportional fairness (PF) resource allocation to solve the unfair V2I communication when vehicles with different velocities transmit with different transmission rates and found the optimal minimum contention window to ensure that the channel resource can be distributed fairly in single-channel and multi-channel networks.
In \cite{WuQ2015}, the authors considered the non-saturated condition of the network, i.e., each vehicle does not always have packet to transmit, and constructed a model for 802.11 DCF based communication in VANETs to ensure the fair V2I communication.
In \cite{XiaSY2018}, the authors considered the non-saturated condition of the network and derived a model to achieve the fair V2I communication in platooning scenario.
However, these works did not consider the effects of age of information.

\subsection{Age of Information}
Age of information has been paid wide attentions in recent years as a new communication metric to measure the data freshness \cite{SanjitK2011,NiYuan2018,Abd-Elmagid2019,Samir2020,WangM2019,AhmedM2017,RoyD2019,AliMaatouk2020}.
In \cite{SanjitK2011}, Sanjit \textit{et al.} first considered the age of information in the vehicle network architecture. They proposed a broadcast rate adaptive algorithm at the application layer to reduce the age of information.
In \cite{NiYuan2018}, Ni \textit{et al.} studied the age of information in the network for the beacon broadcast schedule in VANET and proposed an algorithm to minimize the age of information considering the limited communication resources and the mobility of vehicles.
In \cite{Abd-Elmagid2019}, Abd-Elmagid \textit{et al.} jointly optimized the UAVs flight trajectory, distribution of energy and service time for data packet transmission to minimize the average peak of the age in the UAV-assisted network.
In \cite{Samir2020}, Smair \textit{et al.} designed the trajectory and scheduling strategy problem as a markov decision process (MDP) for the UAV-assisted networks. They used the deep deterministic policy gradient (DDPG) algorithm to learn the trajectory of the deployed UAV and obtained the optimal scheduling strategy to effectively minimize the expected weighted sum AoI (EWSA).
In \cite{WangM2019}, Wang \textit{et al.} considered the wireless sensor network where all nodes use the carrier sense multiple access/collision avoidance (CSMA/CA) mechanism to send their updates in the network. They derived the transmission model of the station and adopted the scaling law to adjust the frame length according to the node density to minimize the average broadcast age of information (BAoI) in the wireless sensor network.
In \cite{AhmedM2017},  Ahmed \textit{et al.} proved that there is a preemptive last generated first served (LGFS) strategy that can ensure the minimum age of the multi-hop WLAN.
In \cite{RoyD2019}, Roy \textit{et al.} considered a simple queue status updates from multiple source stations and calculated the age of information for each source station through SHS.
In \cite{AliMaatouk2020}, Ali \textit{et al.} analyzed the service rate and back-off rate in the WLAN when the CSMA/CA mechanism is used and obtained the closed-form solution of the age of information for each link, then transformed the original optimization problem into a convex problem, and finally obtained the optimal minimum contention window to make the minimum sum age of information of the network.
However, these works did not consider the fairness of the network.

As mentioned above, there is no work designing a scheme to jointly optimize the access fairness and freshness of data for platooning network, which is crucial for autonomous driving.

\section{System Model}
\label{sec3}
In this section, we will describe the system model in detail, which is shown in Fig. \ref{fig1}. Consider a  one way scenario of highway, which is composed of $N$ lanes. Vehicles on each lane are moving with the same direction. The BSs are deployed along the roadside of highway, where each BS is equipped with an edge server having powerful computing capabilities. Multiple platoons are straightly traversing the coverage area of the considered BS on their specific lanes. Each platoon consists of a leader vehicle and several member vehicles, where the leader vehicle controls the kinematics parameters of the platoon while the member vehicles follow it one by one. The tail vehicle is the last vehicle of the platoon. For simplicity, it is assumed that the vehicles on the same lane are homogeneous, i.e., the velocities, intra-platoon spacing and inter-platoon spacing are the same for the vehicles on the same lane and different for the vehicles on different lanes, where the intra-platoon spacing is the distance between two adjacent vehicles in a platoon, while the inter-platoon spacing denotes the distance between the tail vehicle of the previous platoon and the leader one of the following neighbor platoon. The platoons of each lane arrive at the coverage of the BS according to the Poisson process. Each vehicle transmits data to the BS, where the useful information can be extracted by the edge server. Then the useful information is sent back to the vehicles. Since the size of useful information is very small and the downlink channel capacity (BS to Vehicles) is relatively large as compared with the uplink capacity (Vehicles to BS), the downlink is not considered here. Similar to \cite{Marlon2007}, we assume the vehicles on the same lane have the same communication parameters.

Assume that the signal transceiver is installed on the headstock of each vehicle. Once a vehicle arrives at the network, i.e., the headstock of the vehicle arrives at the coverage of BS, it will transmit packet to the BS immediately. We consider the saturated condition of the network \cite{Bianchi2000}, i.e., each vehicle always has one packet to transmit when it captures the channel, to explore the extreme performance of our scheme. Each vehicle adopts the 802.11 DCF for data transmission. Specifically, a vehicle initializes a back-off process and randomly chooses an integer value from 0 to $W_0-1$ as the back-off counter, where $W_0$ is the minimum contention window. Then the value of the back-off counter decreases by 1 after each time slot. When the value of the back-off counter decreases to 0, the vehicle captures the channel. Then the vehicle generates a packet and transmits it to the BS via the captured channel. If more than one vehicle are transmitting at the same time, a collision will occur and thus incurs the transmission failure; otherwise the transmission is successful. The above process will be repeated to transmit different packets.

\section{Fairness and Age of Data}
\label{sec4}
In this section, we derive the relationship between the fairness, average age of data in the network and minimum contention windows as well as velocities. We first define the fairness index to measure the fairness of the network and derive the fairness index as a function of velocity and minimum contention window. Then we adopt the SHS to derive the relationship between the average age of data in the network and minimum contention window as well as velocity. The parameters used in this section are listed in Table \ref{tab1}.

\begin{table}\footnotesize
\caption{Notations used in this section}
\label{tab1}
\centering
\begin{tabular}{|c|p{6.6cm}|}
\hline
\textbf{Notation} &\textbf{Description}\\
\hline
$\boldsymbol{A}_{l}$ & The transition reset matrix of age process. \\
\hline
$C_{bit}$ & The bit rate of the channel. \\
\hline
$C(\mathbf{R})$ & The normalization factor. \\
\hline
$C_s^i$ & The transmission rate of a vehicle on the $i$-th lane. \\
\hline
$d_w^i$ &The average length of a complete platoon on the $i$-th lane. \\
\hline
$D_s^i$ &\multicolumn{1}{m{6.6cm}|}{The average length of the incomplete platoon-interval on the $i$-th lane.} \\
\hline
$D_w^i$ &The average length of a platoon-interval on the $i$ lane. \\
\hline
$H_k$ &The average service rate of link $k$. \\
\hline
$K_{index}$ &The fairness index. \\
\hline
$\mathbb{L}_q$ &The set of all paths from state $q$. \\
\hline
$\mathbb{L}_q'$ &The set of all paths that transition to state $q$. \\
\hline
$m_i$ &The number of complete platoons on the $i$-th lane. \\
\hline
$n_i$ &The number of vehicles on the $i$-th lane. \\
\hline
$n_p$ &The average number of vehicles in a complete platoon. \\
\hline
$n_s^i$ &\multicolumn{1}{m{6.6cm}|}{The number of vehicles in the incomplete platoon on the $i$-th lane.} \\
\hline
$n_w^i$ &\multicolumn{1}{m{6.6cm}|}{The number of vehicles in complete platoon on the $i$-th lane.} \\
\hline
$N$ &The number of lanes within the coverage of the BS. \\
\hline
$N_{bit}$ &The average number of bits in a packet. \\
\hline
$N_v$ &The total number of vehicles in network. \\
\hline
$p_i$ &The collision probability on the $i$-th lane. \\
\hline
$q(t)$ &The discrete process of Markov chain. \\
\hline
$r_0$ &The minimum intra-platoon spacing. \\
\hline
$r_s^i$ &The average inter-platoon spacing on the $i$-th lane. \\
\hline
$r_w^i$ &The intra-platoon spacing on the $i$-th lane. \\
\hline
$R$ &The BS coverage. \\
\hline
$R_k$ &The average back-off rate of link $k$. \\
\hline
$s$ &The average length of a vehicle. \\
\hline
$S$ &The normalized throughput of network. \\
\hline
$T_b^i$ &The average back-off time of a vehicles on the $i$-th lane. \\
\hline
$T_h$ &The minimum time headway. \\
\hline
$T_i$ &\multicolumn{1}{m{6.6cm}|}{The time that a vehicle on the $i$-th lane takes to pass through the coverage of BS in the network.} \\
\hline
$T_s$ &The average time of successful transmission. \\
\hline
$T_{slot}$ &The length of a time slot. \\
\hline
$v_i$ &The velocity of a vehicle on the $i$-th lane. \\
\hline
$\overline{\boldsymbol{v}}_{ql}$ &The steady correlation between $q_l$ and $\boldsymbol{x}$. \\
\hline
$W_0^i$ &\multicolumn{1}{m{6.6cm}|}{The minimum contention window of a vehicle on the $i$-th lane.} \\
\hline
$\boldsymbol{x}(t)$ &The age for a link at $t$. \\
\hline
$\Delta T_i$ &\multicolumn{1}{m{6.6cm}|}{The time difference between two consecutive platoons on the $i$-th lane to enter the network.} \\
\hline
$\overline{\Delta}$ &The average age of data in the network. \\
\hline
$\overline{\Delta}_k$ &The average age of data for link $k$. \\
\hline
$\lambda_i$ &The platoon arrival rate on the $i$-th lane. \\
\hline
$\lambda_i^{max}$ &The maximum platoon arrival rate on lane $i$. \\
\hline
$\lambda^{(l)}$ &The transition rate from state $q$ to state $q_l$. \\
\hline
$\tau_i$ &The transmission probability of vehicles on the $i$-th lane. \\
\hline
$\overline{\pi}_q$ &\multicolumn{1}{m{6.6cm}|}{The state probability of the Markov chain of the discrete process $q(t)$.} \\
\hline
\end{tabular}
\vspace{-0.6cm}
\end{table}

\subsection{Fairness Index}
The fairness of the network means that the amount of data transmitted by vehicles with different velocities are the same when they are traversing the coverage of BS, thus we have
\begin{equation}
{C_s^i}{T_i} = C,
\label{eq1}
\end{equation}
where $T_i$ is the time duration one vehicle on lane $i$ ${(i = 1, 2, \ldots N)}$ traverses the coverage of a BS, $C_s^i$ is the transmission rate of a vehicle on lane $i$ $(C$ is a constant$)$. Since vehicles on the same lane keep with the same velocity, $T_i$ can be calculated as

\begin{equation}
T_i = \frac{R}{v_i},
\label{eq2}
\end{equation}
where $v_i$ is the velocity of a vehicle on lane $i$ and $R$ is coverage range of BS.

According to \cite{WuQ2015}, the transmission rate of a vehicle on lane $i$ is calculated as
\begin{equation}
{C_s^i} = S \times \frac{{C_{bit}}}{N_{bit}} \times \frac{\tau_i}{{\sum\limits_{j=1}^{N_v}{\tau_j}}},
\label{eq3}
\end{equation}
where $S$ is the normalized throughput of the network, $C_{bit}$ is the bit rate of the channel, $N_{bit}$ is the bits of each data packet, $N_v$ is the total number of vehicles in the network and $\tau_i$ is the transmission probability of a vehicle on lane $i$.

Substituting Eqs. \eqref{eq2} and \eqref{eq3} into Eq. \eqref{eq1}, we have
\begin{equation}
S \times \frac{{C_{bit}}}{N_{bit}} \times \frac{\tau_i}{{\sum\limits_{j=1}^{N_v}{\tau_j}}} \times \frac{R}{v_i} = C,
\label{eq4}
\end{equation}
where $S$, $C_{bit}$, $N_{bit}$, $\sum\limits_{j=1}^{N_v}{\tau_j}$ and $C$ are constants. Thus Eq. \eqref{eq4} can be rewritten as
\begin{equation}
\tau_i \times \frac{R}{v_i} = \frac{C}{S \times \frac{C_{bit}}{N_bit} \times \frac{1}{\sum\limits_{j=1}^{N_v}{\tau_j}}} = \frac{C}{C'} = K_{index}^i.
\label{eq5}
\end{equation}
where
\begin{equation}
C' = S \times \frac{{C_{bit}}}{N_{bit}} \times \frac{1}{{\sum\limits_{j=1}^{N_v}{\tau_j}}},
\label{eq6}
\end{equation}
which is independent of lane $i$ and is a constant. Since $C$ is a constant, $K_{index}^i$ in Eq. \eqref{eq5} is also a constant. In other words, $K_{index}^i$ should be a constant for the vehicles on different lanes to achieve fairness. $K_{index}^i$ is denoted as the fairness index on lane $i$.

According to \cite{Bianchi2000}, the relationship between $\tau_i$ and the minimum contention window of a vehicle on lane $i$, i.e., $W_0^i$, in the saturated network is calculated as
\begin{equation}
\begin{aligned}
\tau_i = \frac{2}{W_0^i +1}.
\label{eq7}
\end{aligned}
\end{equation}

By substituting Eq. \eqref{eq7} into Eq. \eqref{eq5}, we can get the relationship between $K_{index}^i$ and $v_i$, $W_0^i$, i.e.,
\begin{equation}
K_{index}^i = \frac{2R}{v_i (W_0^i + 1)}.
\label{eq8}
\end{equation}

Given velocities of vehicles on each lane, we can adaptively adjust the minimum contention window $W_0^i$ to design the fairness index at lane $i$, i.e., $K_{index}^i$, under the given velocity $v_i$.

Averaging both sides of Eq. \eqref{eq8}, we have
\begin{equation}
K_{index} = \frac{2R}{\overline{v} (\overline{W}_0 + 1)},
\label{eq9}
\end{equation}
where $\overline{v}$ is the average velocity of vehicles in the network which can be calculated under given the velocities of vehicles on each lane, $\overline{W}_0$ is the average minimum contention window of the network which is given initially in the system, thus $K_{index}$ can be obtained according to Eq. \eqref{eq9}. $K_{index}$ is used to measure the fairness of the network, which is referred to as fairness index of the network. In the network, the access approaches to fairness when the fairness index at each lane $i$, i.e., $K_{index}^i$, gets close to the fairness of the network $K_{index}$.

We can obtain the condition to achieve the fairness in the network, which is described as Lemma \ref{lemma1}, i.e.,
\newtheorem{lemma}{Lemma}[]
\begin{lemma}
\label{lemma1}
The fairness of the network is achieved when ${v_1W_0^1 \approx v_2W_0^2 \approx \dots \approx v_NW_0^N \approx \overline{v}\overline{W}_0}$.
\end{lemma}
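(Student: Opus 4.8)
The plan is to derive the stated chain of approximate equalities directly from the characterization of fairness already established in Eqs.~\eqref{eq8} and~\eqref{eq9}. Recall that, by the discussion following Eq.~\eqref{eq9}, the network access approaches fairness precisely when the per-lane fairness index $K_{index}^i$ gets close to the network fairness index $K_{index}$ for every lane $i$; that is, the operative hypothesis is $K_{index}^i \approx K_{index}$ for all $i \in \{1,\dots,N\}$. So the proof is essentially a matter of substituting the two closed forms and simplifying.

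First I would plug Eq.~\eqref{eq8} for $K_{index}^i$ and Eq.~\eqref{eq9} for $K_{index}$ into $K_{index}^i \approx K_{index}$, obtaining
\begin{equation}
\frac{2R}{v_i\,(W_0^i+1)} \approx \frac{2R}{\overline{v}\,(\overline{W}_0+1)}, \qquad i = 1,\dots,N.
\end{equation}
Since $R>0$ appears on both sides it cancels, and taking reciprocals gives $v_i\,(W_0^i+1) \approx \overline{v}\,(\overline{W}_0+1)$ for each $i$. Next I would invoke the fact that under the IEEE~802.11 DCF the minimum contention window satisfies $W_0 \gg 1$, so that $W_0^i+1 \approx W_0^i$ and $\overline{W}_0+1 \approx \overline{W}_0$, the relative error being $O(1/W_0^i)$ and hence negligible; this yields $v_i W_0^i \approx \overline{v}\,\overline{W}_0$ for every $i$. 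Finally, since the single quantity $\overline{v}\,\overline{W}_0$ is approximated by each of $v_1 W_0^1,\dots,v_N W_0^N$, these products are mutually approximately equal, i.e.
\begin{equation}
v_1 W_0^1 \approx v_2 W_0^2 \approx \dots \approx v_N W_0^N \approx \overline{v}\,\overline{W}_0,
\end{equation}
which is exactly the claim.

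The main obstacle here is not a deep technical step but rather the bookkeeping around the approximations: one must be explicit that ``fairness is achieved'' is an operational/asymptotic notion (the per-lane indices are matched only up to whatever tolerance the subsequent MoPSO optimization attains, and the additive $1$ offset is dropped), and one should justify that discarding the additive constant is harmless for the contention-window sizes used in practice. Everything else is algebraic cancellation of the common factor $2R$ and reciprocation.
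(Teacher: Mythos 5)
Your proposal is correct and follows essentially the same route as the paper's own proof: both equate $K_{index}^i$ with $K_{index}$ via Eqs.~\eqref{eq8} and~\eqref{eq9} and drop the additive $1$ in $W_0^i+1$ because the contention windows are large. Your justification of that approximation (relative error $O(1/W_0^i)$ since $W_0^i \gg 1$) is in fact slightly more careful than the paper's phrasing, but the argument is the same.
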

\begin{proof}
In Eqs. \eqref{eq8} and \eqref{eq9}, $v_iW_0^i$ and $\overline{v}\overline{W}_0$ are much larger than 1, thus $K_{index}^i \approx \frac{2R}{v_iW_0^i}$ and $K_{index} \approx \frac{2R}{\overline{v}\overline{W}_0}$. Since the fairness of the network is achieved when the fairness index at each lane $i$ equals to the fairness of the network, i.e., $K_{index}^i = K_{index}$ is satisfied for each lane $i$, thus based on the approximated $K_{index}^i$ and $K_{index}$, the fairness of the network is achieved when ${v_1W_0^1 \approx v_2W_0^2 \approx \dots \approx v_NW_0^N \approx \overline{v}\overline{W}_0}$.
\end{proof}

\begin{figure}[htbp]
\centering
\includegraphics[width=\linewidth, scale=1.00]{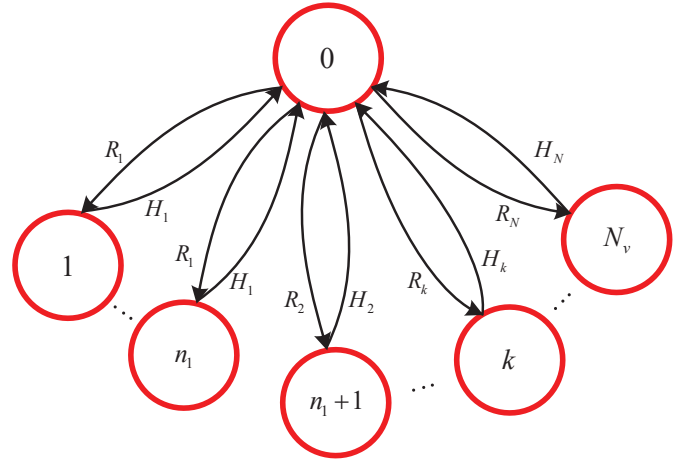}
\caption{Markov chain of the discrete process $q(t)$}
\label{fig2}
\vspace{-0.6cm}
\end{figure}
\subsection{Average Age of Data in the Network}
In this sub-section, we further derive the average age of data in the network, which is equal to the average age of data for each transmitter-receiver pair, i.e., a vehicle and the BS. A transmitter-receiver pair is referred to as a transmission link in this paper. Here, link $k$, represents the communications link between vehicle $k$ and BS, and we adopt the SHS approach to model its transmission process \cite{RoyD2019,AliMaatouk2020}. Similar to \cite{SunY2017}, the data sampling time could be neglected in the modeling since the time interval required for sampling is usually much less than for data transmission.

We define the state set $(q(t), \boldsymbol{x}(t))$ to model the system. Specifically, $q(t)\in \{ 0, 1, 2, \dots N_v \}$ is used to represent the system state at time slot $t$, where $N_v$ indicates the total number of vehicles in the network, where ${q(t) = 0}$ indicates that no link captures the channel at $t$, i.e., the channel is idle at $t$, and ${q(t) = k} (k\neq 0)$ indicates that link $k$ captures the channel at $t$; $\boldsymbol{x}(t)= [x_0(t), x_1(t)]$ is used to represent the age of data for link $k$ at $t$, where $x_0(t)$ is the age of data at the receiver, i.e., BS, at $t$ and $x_1(t)$ is the age of data at link $k$ at $t$. The age of link $k$ at BS is initialized at the beginning and increases linearly with slope of one. The age of data at link $k$ is initialized when vehicle $k$ generates the packet, and then increases linearly with slope of one. When the BS receives the packet, the age of data at receiver is reset to the age of data at link $k$. According to the definition above, the integer variable $q(t)$ maps to a discrete process while $\boldsymbol{x}(t)$ maps to a continuous process.

Let $H_k$ be the average service rate, i.e., the average rate of successful transmission of link $k$, and $R_k$ be the average back-off rate of link $k$. The channel would be idle after link $k$ transmits a packet successfully, and thus state $k$ transits to $0$ with rate $H_k$. Vehicle $k$ would capture the channel to transit a packet after the back-off process and thus state $0$ transits to $k$ with rate $R_k$. Let the number of vehicles on lane $i$ be $n_i$. As described in the Section \ref{sec3}, we consider a scenario composed of $N$ lanes, where vehicles on the same lane are homogeneous. In this case, state $0$ transits to anyone state from $1$ to $n_1$ with the same rate $R_1$, and every state from $1$ to $n_1$ transits to state $0$ with the same rate $H_1$. The transitions between state $0$ and each state from $n_1+1$ to $N_v$ can be deduced similarly. However, Fig. \ref{fig2} cannot reflect the transition of $\boldsymbol{x}(t)$. Next, we will further describe the transition of $\boldsymbol{x}(t)$.

According to SHS, a transition of the discrete process would incur a reset of the continuous process. Let $l$ be a transition of the discrete process, for which we have ${\boldsymbol{x}' = \boldsymbol{x} \boldsymbol{A}_{l}}$, where $\boldsymbol{x}$ and $\boldsymbol{x}'$ are the continuous processes before and after resetting, respectively, and $\boldsymbol{A}_{l}$ is the transition reset maps. Let $q_l$ and $q_l'$ be the discrete state before and after transition $l$, respectively. Then, ${q_l \to q_l'}$ represents the state transition between state $q_l$ and $q_l'$, $\lambda^{(l)}$ is the transition rate of transition $l$. Moreover, since the Markov process in the SHS is ergodic \cite{RoyD2019}, we have  ${\overline{\boldsymbol{v}}_{ql}=[\overline{v}_{q0},\overline{v}_{q1}]}$, where $\overline{\boldsymbol{v}}_{ql}$ is the steady correlation between $q_l$ and $\boldsymbol{x}=[x_0,x_1]$ for transition $l$, here $\overline{v}_{q0}$ is the steady correlation between $q_l$ and $x_0$, $\overline{v}_{q1}$ is the steady correlation between $q_l$ and $x_1$. According to the reset of the continuous process, we can further derive ${\overline{\boldsymbol{v}}_{ql}' = \overline{\boldsymbol{v}}_{ql} \boldsymbol{A}_{l}}$, where $\overline{\boldsymbol{v}}_{ql}'$ is the steady correlation between $q_l'$ and $\boldsymbol{x}'$. Based on the above, we can employ Table \ref{tab2} to describe the SHS.

\begin{table}[htbp]\footnotesize
\centering
\caption{SHS description}
\begin{tabular}{cccccc}
$l$ & $q_l \to q_l'$ & $\lambda^{(l)}$ & $\boldsymbol{x}'=\boldsymbol{x}\boldsymbol{A}_l$ & $\boldsymbol{A}_l$ & $\overline{\boldsymbol{v}}_{ql}' = \overline{\boldsymbol{v}}_{q_l} \boldsymbol{A}_l$ \\ \hline
$1$ & $0 \to 1$ & $R_1$ & ${[x_0,x_1]}$ & $\Big [ \begin{matrix} 1 & 0 \\ 0 & 1 \end{matrix} \Big ]$ & ${[\overline{v}_{00},\overline{v}_{01}]}$ \\
$\vdots$ & $\vdots$ & $\vdots$ & $\vdots$ & $\vdots$ & $\vdots$ \\
$N_v$ & $0 \to N_v$ & $R_{N_v}$ & ${[x_0,x_1]}$ & $\Big [ \begin{matrix} 1 & 0 \\ 0 & 1 \end{matrix} \Big ]$ & ${[\overline{v}_{00},\overline{v}_{01}]}$ \\
$N_v + 1$ & $1 \to 0$ & $H_{1}$ & ${[x_0,x_1]}$ & $\Big [ \begin{matrix} 1 & 0 \\ 0 & 1 \end{matrix} \Big ]$ & ${[\overline{v}_{10},\overline{v}_{11}]}$ \\
$\vdots$ & $\vdots$ & $\vdots$ & $\vdots$ & $\vdots$ & $\vdots$ \\
$N_v + k$ & $k \to 0$ & $H_{k}$ & ${[x_1,0]}$ & $\Big [ \begin{matrix} 0 & 0 \\ 1 & 0 \end{matrix} \Big ]$ & ${[\overline{v}_{k1},0]}$ \\
$\vdots$ & $\vdots$ & $\vdots$ & $\vdots$ & $\vdots$ & $\vdots$ \\
$2N_v$ & $N_v \to 0$ & $H_{N_v}$ & ${[x_0,x_1]}$ & $\Big [ \begin{matrix} 1 & 0 \\ 0 & 1 \end{matrix} \Big ]$ & ${[\overline{v}_{N_v0},\overline{v}_{N_v1}]}$ \\ \hline
\end{tabular}
\label{tab2}
\vspace{-0.2cm}
\end{table}

Next we explain the transitions in Table \ref{tab2} in detail.
\begin{itemize}
\item[1)] Transition $l_a$ $(l_a = \{0, 1, 2, \dots, k, \dots, N_v \})$ means that the channel has been captured. In this case, the transition rate for transition $l_a$ is $R_{l_a}$. Since link $k$ will not reset its age after any link captures its channel, we have ${\boldsymbol{x}' =\boldsymbol{x}\boldsymbol{A}_{l_a}= \boldsymbol{x}}$ and $\overline{\boldsymbol{v}}_{ql_a}' = \overline{\boldsymbol{v}}_{q_{l_a}} \boldsymbol{A}_{l_a}=\overline{\boldsymbol{v}}_{q_{l_a}}$.
\item[2)] Transition $l_b$ $(l_b = \{N_v + 1, N_v + 2, \dots, N_v + k, \dots 2N_v\})$ means that the channel becomes idle after the successful transmission. In this case, the transition rate for transition $l_b$ is $H_{l_b-N_v}$. For transition ${l_b = N_v + k}$, the age of data at the receiver is reset to $x_1$ and the age of data at link $k$ is reset to 0 because of the successful transmission, and thus ${\boldsymbol{x}' =\boldsymbol{x}\boldsymbol{A}_{l_b}= [x_1,0]}$ and $\overline{\boldsymbol{v}}_{q{l_b}}' = \overline{\boldsymbol{v}}_{q_{l_b}} \boldsymbol{A}_{l_b}=[\overline{v}_{k1},0]$. For other transitions, the successful transmission of any link will not reset the age for link $k$, i.e., ${\boldsymbol{x}' =\boldsymbol{x}\boldsymbol{A}_{l_b}= \boldsymbol{x}}$ and $\overline{\boldsymbol{v}}_{q{l_b}}' = \overline{\boldsymbol{v}}_{q_{l_b}} \boldsymbol{A}_{l_b}=\overline{\boldsymbol{v}}_{q_{l_b}}$.
\end{itemize}

Based on the analysis of \cite{RoyD2019}, the average age of link $k$ is calculated as
\begin{equation}
\begin{aligned}
\overline{\Delta}_k = \sum_{q} \overline{v}_{q0},\quad & \forall k \in \{1, \ldots N_v \},\\[-.6pc]
& \forall q \in \{0, \ldots N_v\}.
\label{eq10}
\end{aligned}
\end{equation}

According to Eq. \eqref{eq10}, $\overline{v}_{q0}$ needs to be further analyzed to achieve the average age of a link.

At first, based on the theorem given in \cite{RoyD2019}, we have
\begin{equation}
\begin{aligned}
\overline{\boldsymbol{v}}_{ql_1} (\sum_{l_1} \lambda^{(l_1)}) = \boldsymbol{b}_q \overline{\pi}_q + \sum_{l_2} \lambda^{(l_2)} \overline{\boldsymbol{v}}_{ql_2} \boldsymbol{A}_{l_2}, \quad &l_1 \in \mathbb{L}_q, \\[-.6pc] & l_2 \in \mathbb{L}_q'.
\label{eq11}
\end{aligned}
\end{equation}
where $\overline{\pi}_q$ is the steady state probability of state $q$, $\mathbb{L}_q'$ and $\mathbb{L}_q$ are the input and output set of each discrete state, respectively, $\boldsymbol{b}_q$ is a binary differential equation vector of the age evolution at state $q$, i.e., ${\boldsymbol{b}_q}=\dot{\boldsymbol{x}}$.

Since vehicle $k$ generates a packet to transmit only when it captures the channel, the age of data at link $k$ is increased with slope of one when $q=k$ and it keeps $0$ when $q \neq k$. Moreover, the age at the receiver always increases linearly with slope of one. Therefore, we have
\begin{equation}
\left\{ \begin{array}{lcl}
\boldsymbol{b}_q = [1,0], \quad \forall q \neq k, \\
\boldsymbol{b}_q = [1,1], \quad q = k.
\end{array} \right .
\label{eq12}
\end{equation}

Next, we will further derive $\overline{v}_{q0}$ according to Eqs. \eqref{eq11}, \eqref{eq12} and Table \ref{tab2}.
When $q=0$, we have $\boldsymbol{b}_q = [1,0]$ from Eq. \eqref{eq12}. In this case, the left hand side of Eq. \eqref{eq11} reflects the transitions from state $0$ to other states, i.e., from transition $1$ to $N_v$ in Table \ref{tab2}. The right hand side of Eq. \eqref{eq11} reflects the transitions from other states to state $0$, i.e., from transition $N_v+1$ to $2N_v$ in Table \ref{tab2}. Thus according to Table \ref{tab2}, we have
\begin{equation}
\overline{v}_{00} (\sum_{j=1}^{N_v} R_j) = \overline{\pi}_0 + \sum_{\substack{j=1 \\ j \neq k}}^{N_v} H_j \cdot \overline{v}_{j0} + H_k \cdot \overline{v}_{k1},
\label{eq13}
\end{equation}
\begin{equation}
\overline{v}_{01} (\sum_{j=1}^{N_v} R_j) = \sum_{\substack{j=1 \\ j \neq k}}^{N_v} H_j \cdot \overline{v}_{j1}.
\label{eq14}
\end{equation}

Similarly, when $q \neq 0$, the left hand side of Eq. \eqref{eq11} reflects the transition from state $q$ to state $0$, i.e., transition $N_v+q$ in Table \ref{tab2}. The right hand side of Eq. \eqref{eq11} reflects the transition from state $0$ to state $q$, i.e., transition $q$ in Table \ref{tab2}. Thus according to Table \ref{tab2} and Eq. \eqref{eq12}, we have
\begin{equation}
\overline{v}_{q0} \cdot H_q = \overline{\pi}_q + R_q \cdot \overline{v}_{00}, \quad \forall q = \{1, \ldots N_v\},
\label{eq15}
\end{equation}
\begin{equation}
\overline{v}_{q1} \cdot H_q = R_q \cdot \overline{v}_{01}, \quad \qquad q \neq k,
\label{eq16}
\end{equation}
\begin{equation}
\overline{v}_{k1} \cdot H_k = \overline{\pi}_k + R_k \cdot \overline{v}_{01},  \quad q = k.
\label{eq17}
\end{equation}

Next, we will derive $\overline{v}_{00}$ and $\overline{v}_{q0}$ according to Eqs. \eqref{eq13}-\eqref{eq17} to determine the average age of data in the network.

The relationship between $\overline{v}_{q0}$ and $\overline{v}_{00}$ can be obtained according to Eq. \eqref{eq15} , i.e.,
\begin{equation}
\overline{v}_{q0}=\frac{\overline{\pi}_q}{H_q}+\frac{R_q}{H_q}\cdot \overline{v}_{00}.
\label{eq18}
\end{equation}

Next, $\overline{v}_{00}$ is further derived to determine $\overline{v}_{q0}$. According to Eq. \eqref{eq16}, ${\overline{v}_{q1}}$ can be calculated as

\begin{equation}
{\overline{v}_{q1} = \frac{R_q}{H_q} \cdot \overline{v}_{01}}.
\label{eq19}
\end{equation}

\begin{figure*}[htbp]
\centering
\includegraphics[width=\linewidth, scale=1.00]{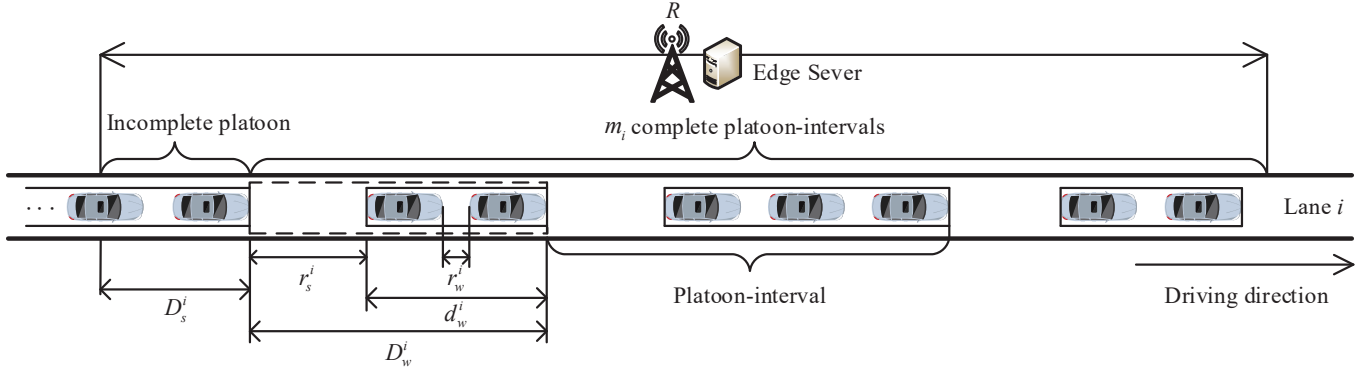}
\caption{Complete and incomplete platoon}
\label{fig3}
\vspace{-0.5cm}
\end{figure*}

Substituting Eq. \eqref{eq19} into Eq. \eqref{eq14}, we can get $\overline{v}_{01}=0$. Substituting $\overline{v}_{01}$ into Eq. \eqref{eq17}, we have
\begin{equation}
\overline{v}_{k1} = \frac{\overline{\pi}_k}{H_k}.
\label{eq20}
\end{equation}

Substituting Eqs. \eqref{eq18} and \eqref{eq20} into Eq. \eqref{eq13}, we have
\begin{equation}
\begin{aligned}
\overline{v}_{00} (\sum_{j=1}^{N_v} R_j) &= \overline{\pi}_0 + \sum_{\substack{j=1 \\ j \neq k}}^{N_v} (\overline{\pi}_j + R_j \cdot \overline{v}_{00}) + \overline{\pi}_k.
\label{eq21}
\end{aligned}
\end{equation}

Reorganizing Eq. \eqref{eq21}, $\overline{v}_{00}$ can be calculated as
\begin{equation}
\overline{v}_{00} = \frac{\overline{\pi}_0 + \sum_{j=1}^{N_v} \overline{\pi}_j}{R_k} = \frac{1}{R_k}.
\label{eq22}
\end{equation}

Combining Eqs. \eqref{eq18} and \eqref{eq22}, $\overline{v}_{q0}$ can be expressed as a function of $\overline{\pi}_q$,$R_q$ and $H_q$. Substituting Eqs. \eqref{eq22} and \eqref{eq18} into Eq. \eqref{eq10}, the average age for link $k$ can be calculated as
\begin{equation}
\begin{aligned}
\overline{\Delta}_k &= \overline{v}_{00} + \sum_{q=1}^{N_v} \overline{v}_{q0} = \sum_{q=1}^{N_v} \frac{\overline{\pi}_q}{H_q} + \sum_{q=1}^{N_v}(\frac{R_q}{H_q} +1 )\frac{1}{R_k}.
\label{eq23}
\end{aligned}
\end{equation}

According to \cite{AliMaatouk2020}, the steady state probability of state $q$ is calculated as
\begin{equation}
\left\{ \begin{array}{lcl}
\overline {\pi }_{0}=\displaystyle \frac {1}{C(\boldsymbol {R})}, \quad \\[.6pc]
\overline {\pi }_{k}=\displaystyle \frac {R_{k}}{C(\boldsymbol {R}) \cdot H_{k}}, \quad k \in [1, \ldots, N_v],
\end{array} \right .
\label{eq24}
\end{equation}
where $C(\boldsymbol{R})$ is a normalization factor that is expressed as
\begin{equation}
C(\boldsymbol{R}) = 1 + \sum_{k = 1}^{N_v} \frac{R_k}{H_k}.
\label{eq25}
\end{equation}

Substituting Eqs. \eqref{eq24} and \eqref{eq25} into Eq. \eqref{eq23}, the average age of data for link $k$ can be expressed as
\begin{equation}
\begin{aligned}
\overline{\Delta}_k = \frac{C(\boldsymbol{R}) }{R_k} + \sum_{q=1}^{N_v} \frac{R_q}{C(\boldsymbol{R}) \cdot {H_q}^2}.
\label{eq26}
\end{aligned}
\end{equation}

Since vehicles on the same lane have the same back-off rate and service rate, thus we can re-write the average age for any vehicle on lane $i$ as
\begin{equation}
\overline{\Delta}_i = \frac{C(\boldsymbol{R}) }{R_i} + \sum_{j=1}^{N} \frac{n_j \cdot R_j}{C(\boldsymbol{R})\cdot {H_j}^2}.
\label{eq27}
\end{equation}
where $n_j$ is the total number of vehicles on lane $j$. After averaging the ages for all links \cite{AliMaatouk2020}, we can get the average age of data in the network, i.e.,
\begin{equation}
\overline{\Delta} = \frac{1}{N_v} (\sum_{j=1}^{N_v} \frac{C(\boldsymbol{R}) }{R_j} + \frac{N_v }{C(\boldsymbol{R})} \sum_{j=1}^{N} n_j \frac{R_j}{{H_j}^2}).
\label{eq28}
\end{equation}

In addition, we give the condition to achieve the minimum age of information of the network in Lemma \ref{lemma2}, i.e.,
\begin{lemma}
\label{lemma2}
When the average service rate of the network is given, the minimum average age of information of the network is achieved when ${W_0^1 \approx W_0^2 \approx \dots \approx W_0^{N_v} \approx \overline{W}_0}$.
\end{lemma}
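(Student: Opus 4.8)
The plan is to read Eq.~\eqref{eq28} as the objective to be minimized over the contention windows $W_0^j$ with the service rates $H_j$ held fixed (this is the ``given average service rate''), and to exploit the fact that the DCF back-off model makes the back-off rate $R_j$ a strictly decreasing function of $W_0^j$ — essentially $R_j \propto 1/(W_0^j+1)$ by Eq.~\eqref{eq7} — so that optimizing over the $W_0^j$ is equivalent to optimizing over the $R_j$. First I would pull $C(\boldsymbol R)$ out of the first sum of Eq.~\eqref{eq28} and collapse the lane-indexed sum into a per-vehicle one (using that vehicles on a lane are homogeneous), obtaining
\[
\overline{\Delta} \;=\; \frac{C(\boldsymbol R)}{N_v}\sum_{j=1}^{N_v}\frac{1}{R_j} \;+\; \frac{1}{C(\boldsymbol R)}\sum_{j=1}^{N_v}\frac{R_j}{H_j^2}.
\]
The key translation is that fixing $\overline{W}_0$, i.e.\ $\tfrac{1}{N_v}\sum_j W_0^j = \overline{W}_0$, forces $\Sigma := \sum_j 1/R_j$ to be a constant (since $1/R_j$ is affine in $W_0^j$), and treating the $H_j$ as the single given value $\bar H$ makes $C(\boldsymbol R)=1+\tfrac1{\bar H}\sum_j R_j$ depend on the $R_j$ only through $u:=\sum_j R_j$.

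Second, I would establish that, with $\Sigma$ and $\bar H$ held fixed, $\overline{\Delta}$ is a strictly increasing function of $u$. Substituting $C = 1 + u/\bar H$ gives
\[
\overline{\Delta} \;=\; \frac{\Sigma}{N_v}\Big(1+\frac{u}{\bar H}\Big) \;+\; \frac{1}{\bar H^2/u + \bar H},
\]
in which both terms are visibly increasing in $u$. Hence minimizing $\overline{\Delta}$ over the $R_j$ subject to $\sum_j 1/R_j = \Sigma$ reduces to minimizing $u=\sum_j R_j$ subject to the same constraint. Third, Cauchy--Schwarz (equivalently AM--HM) gives $\big(\sum_j R_j\big)\big(\sum_j 1/R_j\big)\ge N_v^2$, so $u \ge N_v^2/\Sigma$ with equality \emph{iff} $R_1=R_2=\cdots=R_{N_v}$. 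Therefore $\overline{\Delta}$ is minimized precisely when all back-off rates are equal, which via the monotone correspondence $R_j\leftrightarrow W_0^j$ means $W_0^1 = W_0^2 = \cdots = W_0^{N_v}$; and since $\overline{W}_0$ is by definition the average of the $W_0^j$, this common value must equal $\overline{W}_0$, which is exactly the claim $W_0^1 \approx W_0^2 \approx \cdots \approx W_0^{N_v} \approx \overline{W}_0$.

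I expect the main obstacle to be the second step together with the reduction it rests on. Eq.~\eqref{eq28} couples the $R_j$ through $C(\boldsymbol R)$ and through two different symmetric functions, so one must check carefully that (a) fixing $\overline{W}_0$ genuinely pins down $\sum_j 1/R_j$ — this is where the $R_j\propto 1/(W_0^j+1)$ relation and the large-window regime used in Lemma~\ref{lemma1} re-enter — and (b) collapsing the per-lane $H_j$ to one $\bar H$ is legitimate for this argument; the Cauchy--Schwarz step itself is then routine. A related subtlety is the ``$\approx$'' in the statement: because $R_j$ is only approximately proportional to $1/(W_0^j+1)$ and the service rates are only approximately lane-independent, the conclusion is necessarily approximate, and the write-up should stay at the same heuristic level as the proof of Lemma~\ref{lemma1}.
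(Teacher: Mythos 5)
Your proof is correct, and it reaches the paper's conclusion by a genuinely different route. The paper's own argument first approximates $C(\boldsymbol{R}) = 1 + \sum_k R_k/H_k \approx \sum_k R_k/H_k$ (justified by $R_k \gg H_k$, since $T_{slot} \ll T_s$), sets $H_j \equiv H_s$, and thereby collapses Eq.~\eqref{eq28} into (a constant times) the double sum $\sum_j \sum_k R_k/R_j$; this quantity is scale-invariant in $\boldsymbol{R}$, so the paper minimizes it by pairing terms and using $x + 1/x \ge 2$, and the fixed-average-window condition is invoked only at the very end to pin the common value of the $W_0^j$ to $\overline{W}_0$. You instead keep $C(\boldsymbol{R}) = 1 + u/\bar{H}$ exactly, bring the constraint in up front (a fixed $\overline{W}_0$ fixes $\Sigma=\sum_j 1/R_j$ because $1/R_j = (W_0^j-1)T_{slot}/2$ is affine in $W_0^j$ --- note this comes from Eqs.~\eqref{eq29}--\eqref{eq30}, i.e.\ Eq.~\eqref{eq59}, not from Eq.~\eqref{eq7}, which gives the transmission probability $\tau_i$ with a $W_0^i+1$ denominator), observe that the objective depends on $\boldsymbol{R}$ only through $(\Sigma, u)$, prove monotonicity in $u$, and finish with AM--HM. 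Your version buys rigor: it dispenses with the $C(\boldsymbol{R})$ approximation entirely and makes explicit exactly where the constraint enters, whereas the paper's version buys a shorter computation at the cost of one more approximation (and, incidentally, drops a harmless factor of $1/N_v$ in passing from Eq.~\eqref{eq55} to Eq.~\eqref{eq56}). Both arguments identify the same equality case $R_1 = \dots = R_{N_v}$, hence $W_0^1 = \dots = W_0^{N_v} = \overline{W}_0$, and both share the same caveats you already flag: the homogenization of the $H_j$ and the ``$\approx$'' inherited from the modeling assumptions.
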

The proof of Lemma \ref{lemma2} please refer to APPENDIX at the end of this paper.

According to Eq. \eqref{eq28}, we can find the average age of data in the network is determined by the average back-off rate and service rate of a vehicle on lane $i$, i.e., $R_i$ and $H_i$, and the average number of vehicles in the network $N_v$. Next, we will further derive the relationship between $R_i$, $H_i$, $N_v$ and the minimum contention window $W_0^i$ as well as velocity $v_i$ of a vehicle on lane $i$.

\subsubsection{Average Back-off Rate and Service Rate}
\
\newline
\indent
We first derive the average back-off rate and service rate. The back-off rate on lane $i$ can be expressed as
\begin{equation}
R_i = \frac{1}{{T}_b^i}, \quad 1 \leq i \leq N,
\label{eq29}
\end{equation}
where ${{T}_b^i}$ is the average back-off time of a vehicle on lane $i$, which is calculated as
\begin{equation}
{T}_{b}^{i} = \frac{W_0^i-1}{2} \times T_{slot}, \quad 1 \leq i \leq N,
\label{eq30}
\end{equation}
where $T_{slot}$ is the length of a time slot.

Collisions will hardly happen if the collision probability of a transmission is smaller than an upper bound $p_{UB}$. In this case, the transmission can be deemed to be collision-free \cite{AliMaatouk2020}. The collision probability for a vehicle on lane $i$, i.e., $p_i$, is the probability that at least one more vehicles is transmitting at the same time. Thus the relationship between the transmission probability $\tau_i$, the number of vehicles $n_i$ and the collision probability $p_i$ is calculated as
\begin{equation}
p_i = 1 - (1 - \tau_i)^{n_i - 1}\prod_{j=1,j \ne i}^N (1 - \tau_j)^{n_j}.
\label{eq31}
\end{equation}

Since $\tau_i$ is calculated according to Eq. \eqref{eq7}, a large $W_0^i$ can lead to a small $p_i$. According to Eq. \eqref{eq31}, the upper bound of the collision probability $p_{UB}$ corresponds to a lower bound of the minimum contention window $\overline{W}_0^{LB}$. Therefore, the network is collision-free if the minimum contention window of each vehicle is larger than $\overline{W}_0^{LB}$. Similar with \cite{AliMaatouk2020}, we consider the collision-free network in this paper, and thus the average service rate $H_i$ is calculated as
\begin{equation}
H_i = \frac{1}{T_s}, \forall i \in \{1, \ldots N \},
\label{eq32}
\end{equation}
where $T_s$ is the average successful transmission time of a packet \cite{Bianchi2000}.

\subsubsection{Average number of vehicles}
\
\newline
\indent
Next, the average number of vehicles in the network $N_v$ is derived. For simplicity, we first derive the average number of vehicles on lane $i$ in the network, as shown in Fig. \ref{fig3}. All vehicles on lane $i$ are moving with the same direction. It consists of $m_i$ complete platoons and an incomplete platoon. Note that the distance between the leader vehicles of two adjacent platoons denotes a complete platoon-interval. Then the coverage of the BS $R$ can be expressed as the summation of the average length of $m_i$ complete platoon-intervals and an incomplete platoon. In addition, the vehicles on lane $i$ are assumed to be with the same velocity, the same intra-platoon spacing and the same inter-platoon spacing.

The average number of vehicles on lane $i$ is expressed as

\begin{equation}
n_i = n_w^i + n_s^i,
\label{eq33}
\end{equation}
where $n_w^i$ and $n_s^i$ are the number of vehicles in complete platoons and incomplete platoons on lane $i$, respectively. Next, we will derive $n_w^i$ and $n_s^i$, respectively.

Given the number of complete platoons $m_i$ and the average number of vehicles in each platoon $n_p$, thus the number of vehicles in compete platoons on the lane is calculated as
\begin{equation}
n_w^i = m_i \times n_p.
\label{eq34}
\end{equation}

In order to facilitate the analysis, we define the platoon interval as the sum of the average length of a platoon and the following inter-platoon spacing. Thus there is $m_i$ platoon intervals on lane $i$, which is shown in Fig. \ref{fig3}. The length of a platoon interval is calculated as
\begin{equation}
D_w^i = d_w^i + r_s^i,
\label{eq35}
\end{equation}
where $d_w^i$ is the average length of a platoon and $r_s^i$ is the average distance between two consecutive platoons.

The average length of a platoon $d_w^i$ is composed of the length of $n_p$ vehicles and $n_p-1$ intra-platoon spacings, i.e.,
\begin{equation}
d_w^i = n_ps + (n_p-1)r_w^i = (n_p - 1)(s + r_w^i) + s,
\label{eq36}
\end{equation}
where $s$ is the length of a vehicle and $r_w^i$ is a intra-platoon spacing. Substituting Eq. \eqref{eq36} into Eq. \eqref{eq35}, we can observe that $D_w^i$ is a function of $n_p$, $r_w^i$ and $s$.

Since the number of complete platoons must be an integer, it can be obtained by rounding down the ratio of the coverage of the BS to the average length of a platoon, i.e.,
\begin{equation}
m_i = \left \lfloor \frac{R}{D_w^i} \right \rfloor.
\label{eq37}
\end{equation}

According to Eqs. \eqref{eq35}-\eqref{eq38}, $n_w^i$ is calculated as
\begin{equation}
n_w^i = \left \lfloor \frac{R}{(n_p - 1)(s + r_w^i) + s + r_s^i} \right \rfloor \times n_p.
\label{eq38}
\end{equation}

Next, the average number of vehicles in the incomplete platoon is derived. Since the transceiver is installed on the headstock of each vehicle, a vehicle enters the network as long as the headstock of the vehicle reaches the network. Therefore, the number of vehicles in an incomplete platoon is calculated by rounding up the ratio of the average length of the incomplete platoon $D_s^i$ to the sum of the vehicle length $s$ and an intra-platoon spacing $r_w^i$, i.e.,
\begin{equation}
n_s^i = \left \lceil \frac{D_s^i}{s + r_w^i} \right \rceil.
\label{eq39}
\end{equation}

The length of an incomplete platoon is calculated as the difference between the coverage area of the BS and the length of $m_i$ complete platoon intervals, i.e.,
\begin{equation}
D_s^i = R - m_i \times D_w^i = R - m_i \times (d_w^i + r_s^i).
\label{eq40}
\end{equation}

According to Eqs. \eqref{eq36}, \eqref{eq39} and \eqref{eq40}, $n_s^i$ is calculated as
\begin{equation}
n_s^i = \left \lceil \frac{R - m_i \times [(n_p - 1)(s + r_w^i) + s + r_s^i]}{s + r_w^i} \right \rceil.
\label{eq41}
\end{equation}

From Eqs. \eqref{eq38} and \eqref{eq41}, we can see that $n_w^i$ and $n_s^i$ are related with the average inter-platoon spacing $r_s^i$ and intra-platoon spacing $r_w^i$. Next, we further derive the relationship between $r_s^i$, $r_w^i$ and velocity $v_i$.

According to  \cite{JiaD2014}, the relationship between $r_w^i$ and $v_i$ is
\begin{equation}
{r_w^i} = \frac{{{r_0} + {v_i}{T_h}}}{{\sqrt {1 - {{\left( {\frac{{{v_i}}}{{{v_0}}}} \right)}^4}} }},
\label{eq42}
\end{equation}
where $r_0$ is the minimum intra-platoon spacing, $T_h$ is the time headway and $v_0$ is the maximum velocity of vehicles in the network.

The inter-platoon spacing should be greater than or equal to the intra-platoon spacing to prevent collisions between vehicles \cite{Naus2010}. Since the average distance between two consecutive platoons is $D_w^i$, the time difference between two consecutive platoons is calculated as
\begin{equation}
\Delta T_i = \frac{D_w^i}{v_i} = \frac{d_w^i + r_s^i}{v_i}, r_w^i \leq r_s^i.
\label{eq43}
\end{equation}

Thus the platoon arrival rate can be calculated as
\begin{equation}
\lambda_i = \frac{1}{\Delta T_i} = \frac{v_i}{d_w^i + r_s^i}, r_w^i \leq r_s^i.
\label{eq44}
\end{equation}

Note that according to Eqs. \eqref{eq36} and \eqref{eq44}, we can find that the platoon arrival rate $\lambda_i$ achieves the maximum value when the average inter-platoon spacing $r_s^i$ equals to the intra-platoon spacing $r_w^i$, i.e., $\lambda_i^{max} = \frac{v_i}{n_p(s + r_w^i)}$.

Substituting Eq. \eqref{eq36} into Eq. \eqref{eq44}, we can get the relationship between $r_s^i$, $r_w^i$ and $v_i$, i.e.,
\begin{equation}
r_s^i = \frac{v_i}{\lambda_i} - [(n_p - 1)(s + r_w^i) + s], \lambda_i \leq \frac{v_i}{n_p(s + r_w^i)}.
\label{eq45}
\end{equation}

Up to now, we have obtained the relationships between $r_w^i$, $r_s^i$ and $v_i$ according to Eqs. \eqref{eq42} and \eqref{eq45}, respectively. Substituting Eqs. \eqref{eq42} and \eqref{eq45} into Eqs. \eqref{eq38} and \eqref{eq41}, either $n_w^i$ and $n_s^i$ can be expressed as a function $v_i$, respectively. Thus we can further drive the average number of vehicles on lane $i$, i.e., $n_i$ as a function of $v_i$ according to Eq. \eqref{eq33}. As a result, we can obtain the relationship between the average number of vehicles in the network $N_v$ and $v_i$ by aggregating the average number of vehicles on each lane.

Conclusively, we have obtain the relationship between $N_v$ and $v_i$ and the relationship between $R_i$ and $W_0^i$. Moreover, $H_i$ can be determined by a given $T_s$. Thus according to Eq. \eqref{eq28}, given the velocities of vehicles on each lane, the average age of data in the network can be calculated according to the minimum contention window of vehicles on each lane.

\section{Optimization Problem and Solution}
\label{sec5}
In this section, we formulate a multi-objective optimization and adopt the multi-objective particle swarm optimization algorithm to solve it, thus the optimal minimum contention windows are derived.
\subsection{Optimization Objective}
In this paper, we adjust the minimum contention window of each vehicle to achieve two goals. First, it is preferable that the data transmitted by vehicles on different lanes have the similar size, i.e., the difference between fairness index $K_{index}^i$ for lane $i$ and $K_{index}$ should be as small as possible. Thus we have $N$ optimization objective functions as follows.

\textbf{Objective 1 to N:} Minimize the difference of the fairness index on different lanes \\
\begin{equation}
F_{K_i} = \min \left| K_{index}^i - K_{index} \right|,i \in [1, \ldots N].
\label{eq46}
\end{equation}

Another optimization goal is that the average age of data in the network is as small as possible, thus the $(N+1)$-th optimization objective function is

\textbf{Objective N+1:} Minimize the average age of data in the network \\
\begin{equation}
F_{age} = \min \overline{\Delta}.
\label{eq47}
\end{equation}

As mentioned above, the multi-objective optimization function is written as
\begin{equation}
\begin{aligned}
\boldsymbol{F} = & \begin{bmatrix}
      F_{K_{1}} \\
      F_{K_{2}} \\
      \vdots \\
      F_{K_{N}} \\
      F_{age}
\end{bmatrix}, \\
\qquad \qquad & S.t \\
& v_0' \leq v_i \leq v_0, \\
& W_0^{LB} \leq W_0^i \leq W_0^{UB}.
\label{eq48}
\end{aligned}
\end{equation}
where $v_0'$ and $v_0$ is the allowed minimum and maximum velocities of vehicles in the network, respectively. Thus the first constraint is to ensure that the velocity of vehicles in the network is within the allowable velocity range. Moreover, $W_0^{LB}$ is the lower bound of the minimum contention window to ensure the collision-free transmission, which is calculated according to Eqs. \eqref{eq7} and \eqref{eq31} under the given upper bound of the collision probability $p_{UP}$. Thus the transmissions in the network are collision-free condition if the minimum contention window of each vehicle is larger than $W_0^{LB}$ \cite{AliMaatouk2020}. $W_0^{UB}$ is the allowed upper bound of the minimum contention window defined in the 802.11 protocol. Thus the second constraint is used to ensure that the data transmissions in the network are collision-free and the minimum contention window is within the allowable range of 802.11 protocol.

According to the derivations in Section \ref{sec4}, given the velocities of vehicles on each lane, the fairness indexes on each lane and average age of data in the network in Eq. \eqref{eq48} can be designed through adaptively adjusting the minimum contention windows of each vehicle.

\subsection{Optimization Solution}
In this sub-section, we use the MoPSO algorithm to solve the multi-objective optimization function. A population includes $P$ particles. A particle denotes a set containing $N$ elements, i.e., the minimum contention windows of vehicles on $N$ lanes. The input of the algorithm is the set of the velocities of vehicles on each lane $\boldsymbol{v}$, and the output is the set of the optimal contention windows of vehicles on each lane $\boldsymbol{W}_0^{opt}$. The pseudocode of the algorithm is described as Algorithm \ref{al1}, where $\boldsymbol{p}_m^{old}$ and $\boldsymbol{p}_m^{new}$ are particle $m$ before and after update, respectively, $\boldsymbol{v}_m^{old}$ and $\boldsymbol{v}_m^{new}$ are the velocities of particle $m$ before and after update, respectively, $\boldsymbol{F}_m^{old}$ and $\boldsymbol{F}_m^{new}$ are the objective values of $\boldsymbol{p}_m^{old}$ and $\boldsymbol{p}_m^{new}$, respectively, $\boldsymbol{p}_m^{best}$ is the individual optimal solution of particle $m$, $\boldsymbol{g}_m^{best}$ is the global optimal solution of particle $m$, $t$ is the number of the iterations. The algorithm is composed of three procedures, i.e., initialization procedure, iteration procedure and optimization procedure. Next, we will introduce the procedures of the algorithm in detail.

\subsubsection{Initialization Procedure}
\
\newline
\indent
$P$ particles are initialized at first. Specifically, for each particle $m$, each element is initialized as an integer which is randomly selected within $[W_0^{LB},W_0^{UB}]$, thus $\boldsymbol{p}_m^{old}$ is generated. Then, the velocity of each element of particle $m$ is randomly selected within $[v_{min},v_{max}]$, thus $\boldsymbol{p}_m^{old}$ is generated. The objective value of each particle is calculated according to Eq. \eqref{eq48} under the input $\boldsymbol{v}$. The individual optimal solution of each particle is initialized as $\boldsymbol{p}_m^{best}=\boldsymbol{p}_m^{old}$ (lines 1-5).

Next, we determine the set of the non-inferior solutions, i.e., $pareto$ set, from all individual optimal solutions according to the $pareto$ dominance relationship \cite{Pareto_condition}. Specifically, a solution is a non-inferior solution if the solution dominates any other solution, i.e., the values of at least one objective value of the solution are better than the corresponding objective values of any other solution.

Then the grid method is used to divide the space of $pareto$ set into sub-regions. Thus we obtain $(mesh_{div})^N$ sub-regions with the same sizes by dividing $pareto$ set. Let $n_{pareto}$ be the number of individuals in the $pareto$ set, $\boldsymbol{p}_n^{pareto}$ be particle $n$ $(1\leq n \leq n_{pareto})$ in the $pareto$ set, $\boldsymbol{W}_0^{LB}$ be the lower bound of a particle where each element is ${W}_0^{LB}$, $\boldsymbol{id}_n$ be the index of the sub-region that particle $n$ belongs to, $\boldsymbol{id}_n$ can be calculated as
\begin{equation}
\begin{aligned}
\boldsymbol{id}_n = \left \lfloor (\boldsymbol{p}_n^{pareto} - \boldsymbol{W}_0^{LB}) \cdot \frac{n_{pareto}}{W_0^{UB} - W_0^{LB}} \right \rfloor.
\label{eq49}
\end{aligned}
\end{equation}
Hence we can map each particle of $pareto$ set to the index of the sub-region (lines 6-9).

Thus, the congestion degree of each sub-region, i.e., defined as the number of particles in each sub-region, is further obtained. The probability of particle $n$ which is used to select global optimal solution by the roulette gambling method is calculated according to congestion degree of the corresponding sub-region, i.e.,
\begin{equation}
p_n = \frac{1}{(n_g)^\alpha},
\label{eq50}
\end{equation}
where $n_g$ is the number of particles in the sub-region, $\alpha$ is a given constant. Then the probability of each particle in the $pareto$ set is normalized in Line 10.

Next, the roulette gambling method is adopted to choose the global optimal solution of each particle of the population. Specifically, each particle of the population generates a random probability from $[0,1]$. Denote $p_m^r$ as the generated random probability of particle $m$, and $p_n^{sum}$ as the summation of the normalized probabilities from particle $1$ to $n$ ${(1 \leq n \leq n_{pareto})}$ in the $pareto$ set. For particle $m$, $\boldsymbol{g}_m^{best}$ is selected through comparison as follows. At the beginning $n=1$, if $p_m^r < p_n^c$, particle $n$ in the $pareto$ set is selected as the global optimal solution of particle $m$, i.e., $\boldsymbol{g}_m^{best}$; otherwise, we will go to the next particle, i.e., $n=n+1$, and repeat the above comparison process until the global optimal solution of particle $m$, i.e., $\boldsymbol{g}_m^{best}$, is selected (line 11).

\begin{algorithm}
  \caption{Multi-objective Particle Swarm Optimization Algorithm}
  \KwIn{$\boldsymbol{v}$}
  \KwOut{$\boldsymbol{W}_0^{opt}$}
  \label{al1}
  \For{each $m \in P$}
  {
    randomly generate $\boldsymbol{p}_m^{old}$ from $[W_0^{LB},W_0^{UB}]$\;
    randomly generate $\boldsymbol{v}_m^{old}$ from $[v_{min},v_{max}]$\;
    calculate the value of objective function $\boldsymbol{F}_m^{old}$\;
    initialize individual optimal solution $\boldsymbol{p}_m^{best}$\;
  }
  initialize the $pareto$ set\;
  divide the space of $pareto$ set by grid method\;
  \For{$(n \in pareto)$}
  {
    calculate the sub-region index $\boldsymbol{id}_n$\;
  }
  calculate normalized probability of each particle in $pareto$ set\;
  roulette gambling method to choose $\boldsymbol{g}_m^{best}$ for each $m \in P$\;
  initialize $t = 0$\;
  \While{$t \leq c$}
  {
    \For{$(m \in P)$}
    {
      update the velocity of particle {$\boldsymbol{v}_m^{new} = \omega \times \boldsymbol{v}_m^{old} + c_1(\boldsymbol{p}_m^{best} - \boldsymbol{p}_m^{old}) + c_2(\boldsymbol{g}_m^{best} - \boldsymbol{p}_m^{old})$}\;
      update particle {$\boldsymbol{p}_m^{new} = \boldsymbol{p}_m^{old} + \boldsymbol{v}_m^{new}$}\;
     update the value of objective function $\boldsymbol{F}_m^{new}$\;
      \If{$\boldsymbol{p}_m^{new}$ dominates $\boldsymbol{p}_m^{old}$}
      {
        $\boldsymbol{p}_m^{old}$ = $\boldsymbol{p}_m^{new}$\;
      }
      ${\boldsymbol{v}_m^{old} = \boldsymbol{v}_m^{new}}$\;
      \If{$\boldsymbol{p}_m^{old}$ dominates $\boldsymbol{p}_m^{best}$}
      {
        $\boldsymbol{p}_m^{best}$ = $\boldsymbol{p}_m^{old}$\;
      }
    }
    update $pareto$ solution set\;
    \For{$(n \in pareto)$}
    {
        calculate the sub-region index $\boldsymbol{id}_n$\;
    }
    calculate the normalized probability\;
    roulette gambling update $\boldsymbol{g}_m^{best}$ for each $m \in P$\;
    $t = t + 1$\;
  }
  initialize $\overline{\Delta} = +\infty$\;
  \For{$(n \in pareto)$}
  {
   \If{for each $i \in [1,...,N]$, $F_{K_i} \leq K_{bound}$}
   {
     \If{$F_n^{age} \leq \overline{\Delta}$}
      {
        $\overline{\Delta} = F_n^{age}$, $\boldsymbol{W}_0^{opt} = \boldsymbol{W}_0^n$\;
      }
    }
  }
  return $\boldsymbol{W}_0^{opt}$\;
\end{algorithm}

\subsubsection{Iteration Procedure}
\
\newline
\indent
We will run the iteration procedure repeatedly to update the $pareto$ set.

In each iteration, the velocity of each particle of the population is updated as
\begin{equation}
\begin{aligned}
\boldsymbol{v}_m^{new} = \omega \times \boldsymbol{v}_m^{old} + c_1(\boldsymbol{p}_m^{best} - \boldsymbol{p}_m^{old})+c_2(\boldsymbol{g}_m^{best} - \boldsymbol{p}_m^{old}).
\label{eq51}
\end{aligned}
\end{equation}
In Eq. \eqref{eq51}, if the updated velocity of an element in a particle is larger than $v_{max}$ or smaller than $v_{min}$, it would be set to $v_{max}$ or $v_{min}$, respectively (lines 12-15).

Then each particle is updated as
\begin{equation}
\begin{aligned}
\boldsymbol{p}_m^{new} = \boldsymbol{p}_m^{old} + \boldsymbol{v}_m^{new}.
\label{eq52}
\end{aligned}
\end{equation}
In Eq. \eqref{eq52}, if an element of a updated particle is larger than $W_0^{UB}$ or smaller than $W_0^{LB}$, it would be set to $W_0^{UB}$ or $W_0^{LB}$, respectively. Then the objective value of each updated particle is calculated according to Eq. \eqref{eq48} under the input $\boldsymbol{v}$ (lines 16-17).

Then each particle is updated according to the $pareto$ dominance relationship \cite{Pareto_condition}. Specifically, for individual $m$, if $\boldsymbol{p}_m^{new}$ dominates $\boldsymbol{p}_m^{old}$, $\boldsymbol{p}_m^{old}$ is updated as $\boldsymbol{p}_m^{new}$; otherwise, $\boldsymbol{p}_m^{old}$ is not updated. Then $\boldsymbol{v}_m^{old}$ is updated as $\boldsymbol{v}_m^{new}$ (lines 18-20).

Then each individual optimal solution is updated according to the $pareto$ dominance relationship. Specifically, if $\boldsymbol{p}_m^{old}$ deminates $\boldsymbol{p}_m^{best}$, $\boldsymbol{p}_m^{best}$ is updated as $\boldsymbol{p}_m^{old}$. Otherwise, $\boldsymbol{p}_m^{best}$ is not updated (lines 21-22).


Next, the $pareto$ set is updated through comparing the dominance relationship of solutions in the new set which is the combination of the updated individual optimal solutions and the current $pareto$ set. Afterwards, the grid method is adopted to divide the space of the updated $pareto$ set into sub-regions. Then the index of the sub-region for each solution is calculated according to Eq. \eqref{eq49}. Next, the congestion degree of each sub-region is calculated. And thus we can derive the probability of each particle in the $pareto$ set according to Eq \eqref{eq50} and further obtain the normalized probability of each particle in the $pareto$ set. Then, the global optimal solution of each particle of the population is updated according to the roulette gambling method. Afterwards, we will go to the next iteration (lines 23-28).
\begin{table}\footnotesize
\caption{Related parameters values}
\label{tab3}
\centering
\begin{tabular}{|c|c|c|c|}
\hline
\textbf{Parameter} &\textbf{Value} &\textbf{Parameter} &\textbf{Value}\\
\hline
$R$ & $200m$ & $N$ & $2$ \\
\hline
$r_{0}$ & $2m$ & $n_p$ & 2 \\
\hline
$v_0'$ & $20m/s$ & $v_0$ & $30m/s$ \\
\hline
$s$ & $5m$ & $T_h$ & $1.6s$ \\
\hline
$\overline{v}$ & $25m/s$ & $\overline{W}$ & 128 \\
\hline
$T_{slot}$ & $50\mu s$ & $T_s$ & $8972 \mu s$ \\
\hline
$P$ & $200$ & $c$ & $100$ \\
\hline
$\omega$ & $0.8$ & $c_1$ & $0.9$ \\
\hline
$c_2$ & $1.8$ & $mesh_{div}$ & $10$ \\
\hline
$W_0^{LB}$ & $64$ & $W_0^{UB}$ & $256$ \\
\hline
$v_{min}$ & $-1.5$ & $v_{max}$ & $1.5$ \\
\hline
$\alpha$ & $3$ & $K_{bound}$ & $0.005$ \\
\hline
$p_{UP}$ & $0.24$ & $ $ & $ $ \\
\hline
\end{tabular}
\vspace{-0.3cm}
\end{table}

\subsubsection{Optimization Procedure}
\
\newline
\indent
In the next process, the optimal solution is selected from the $pareto$ set to jointly optimize
the access fairness and freshness of data, i.e., satisfy the following two conditions,
\begin{itemize}
\item[(1)] The differences between the fairness index of each lane and the average fairness index of the network $K_{index}$ are all small enough, i.e., less than a small threshold $K_{bound}$.
\item[(2)] Average age of data in the network is as small as possible when condition (1) is satisfied.
\end{itemize}

The average age of data $\overline{\Delta}$ is first set as a very large value which is denoted as infinity. Let $F_n^{age}$ be the average age of data calculated according to the particle $n$ in the $pareto$ set, i.e., $\boldsymbol{p}_n^{pareto}$. The optimal minimum contention windows are obtain through the following comparison for each particle in the $pareto$ set. Specifically, for particle $n$ in the $pareto$ set, if the fairness differences of each lane are all smaller than $K_{bound}$ and the age $F_n^{age}$ is smaller than $\overline{\Delta}$, $\overline{\Delta}$ is updated as $F_n^{age}$ and $\boldsymbol{W}_0^{opt}$ is set as $\boldsymbol{p}_n^{pareto}$. Finally, the optimal solution $\boldsymbol{W}_0^{opt}$ is achieved when the iteration is finished.

\subsection{Algorithm Complexity}
The complexity of the algorithm is mainly determined by the iterative number of algorithm $c$, the population size $P$ and the number of multi-objective equations simultaneously. According to the multi-objective optimization equations at the beginning of this section, it is composed of $N$ fairness equations and one network average age equation. Therefore, the algorithm complexity is about ${\mathcal{O}(cP(N+1))}$

\begin{figure}[htbp]
\centering
\includegraphics[width=\linewidth, scale=1.00]{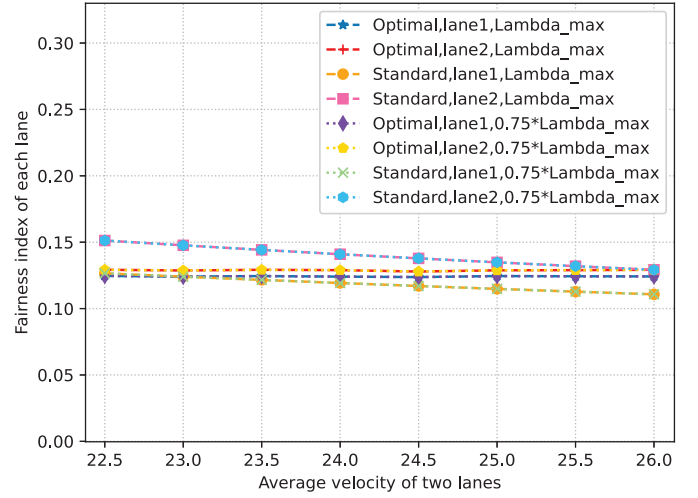}
\caption{Fairness index of each lane vs average velocity}
\label{fig4}
\vspace{-0.3cm}
\end{figure}
\begin{figure}[htbp]
\centering
\includegraphics[width=\linewidth, scale=1.00]{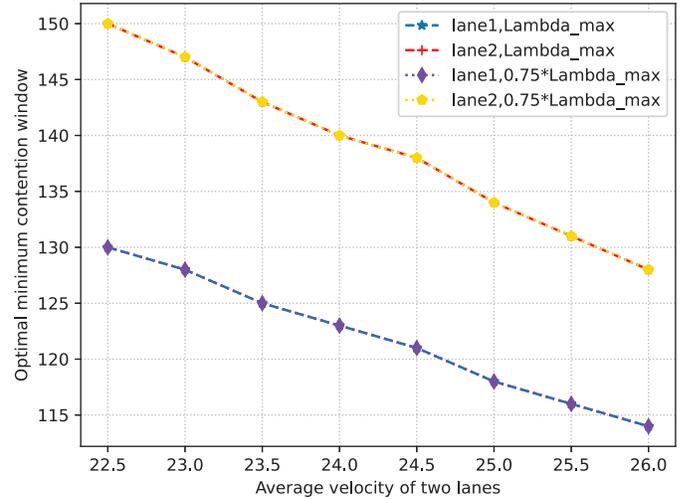}
\caption{Optimal minimum contention window vs average velocity}
\label{fig5}
\vspace{-0.4cm}
\end{figure}
\section{Numerical Simulation And Analysis}
\label{sec6}
In this section, we evaluate performance of the scheme by extensive simulation experiments\footnote{Code to replicate the numerical experiments here presented can be found at https://github.com/qiongwu86/Fairness-Data-Freshness/branches} and discuss the results in detail. The simulation is based on python $3.6$. A scenario in a two-lane one-way highway is considered. The vehicle velocity ranges from $20 m/s$ to $30 m/s$ according to the limit of American highways \cite{Vehicles2015}. The velocity difference between the two lanes keeps $4 m/s$. The average minimum contention window $\overline{W}$, average successful transmission time $T_s$ and the length of a time slot $T_{slot}$ are set according to the 802.11 protocol \cite{802.11}. The up bound of the collision probability $p_{UP}$ is set according to \cite{AliMaatouk2020}. Given $p_{UP}$, $W_0^{LB}$ is calculated according to Eqs. \eqref{eq7} and \eqref{eq31}. Table \ref{tab3} lists the parameters in the simulation.

Fig. \ref{fig4} shows the relationships between the fairness index of each lane and the average velocity of two lanes under platoon arrival rates $\lambda_{max}$ and $0.75\lambda_{max}$. Note that the minimum contention windows are the optimal value and the standard value defined in the 802.11 protocol, i.e., 128, respectively. We can see that when the optimal minimum contention windows are adopted, the fairness indexes of lane 1 and lane 2 almost keep stable under different average velocities and platoon arrival rates. It means that vehicles with different velocity can fairly access the channel with the optimal minimum contention windows. Moreover, we can see that when the standard contention window is adopted, the fairness index gradually decreases with the increase of average velocity. This is because the amount of data transmitted by vehicles with high velocities is smaller than that of vehicles with low velocity. Moreover, it can be seen that the fairness index is almost the same under different platoon arrival rates since the fairness index is only related to the vehicle velocity and the minimum contention window according to Eq. \eqref{eq8}.

\begin{figure*}
  \centering
  \subfigure[]{
    \begin{minipage}{5.5cm}
    \includegraphics[scale=0.4]{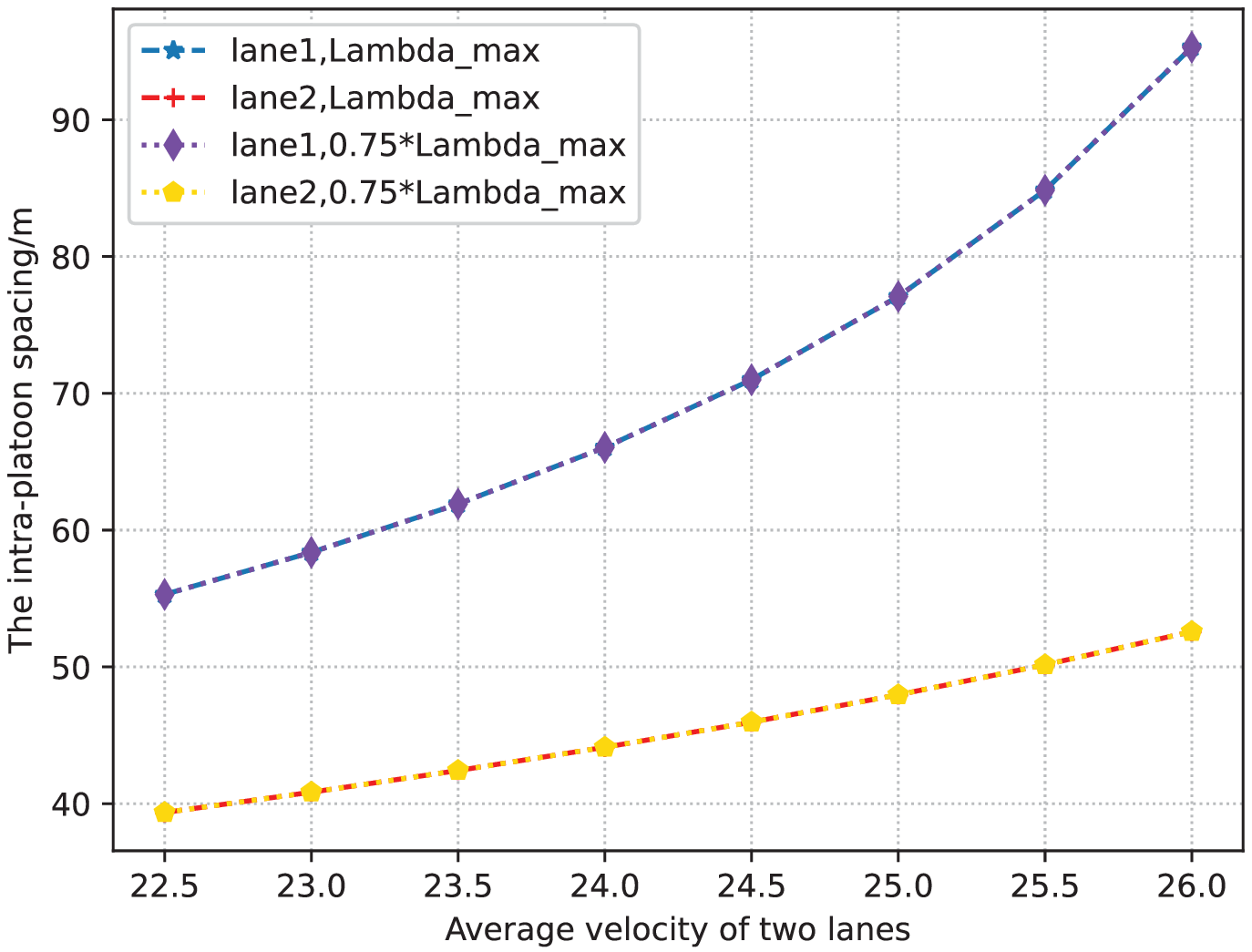}
    \end{minipage}}
  \subfigure[]{
    \begin{minipage}{5.5cm}
    \includegraphics[scale=0.4]{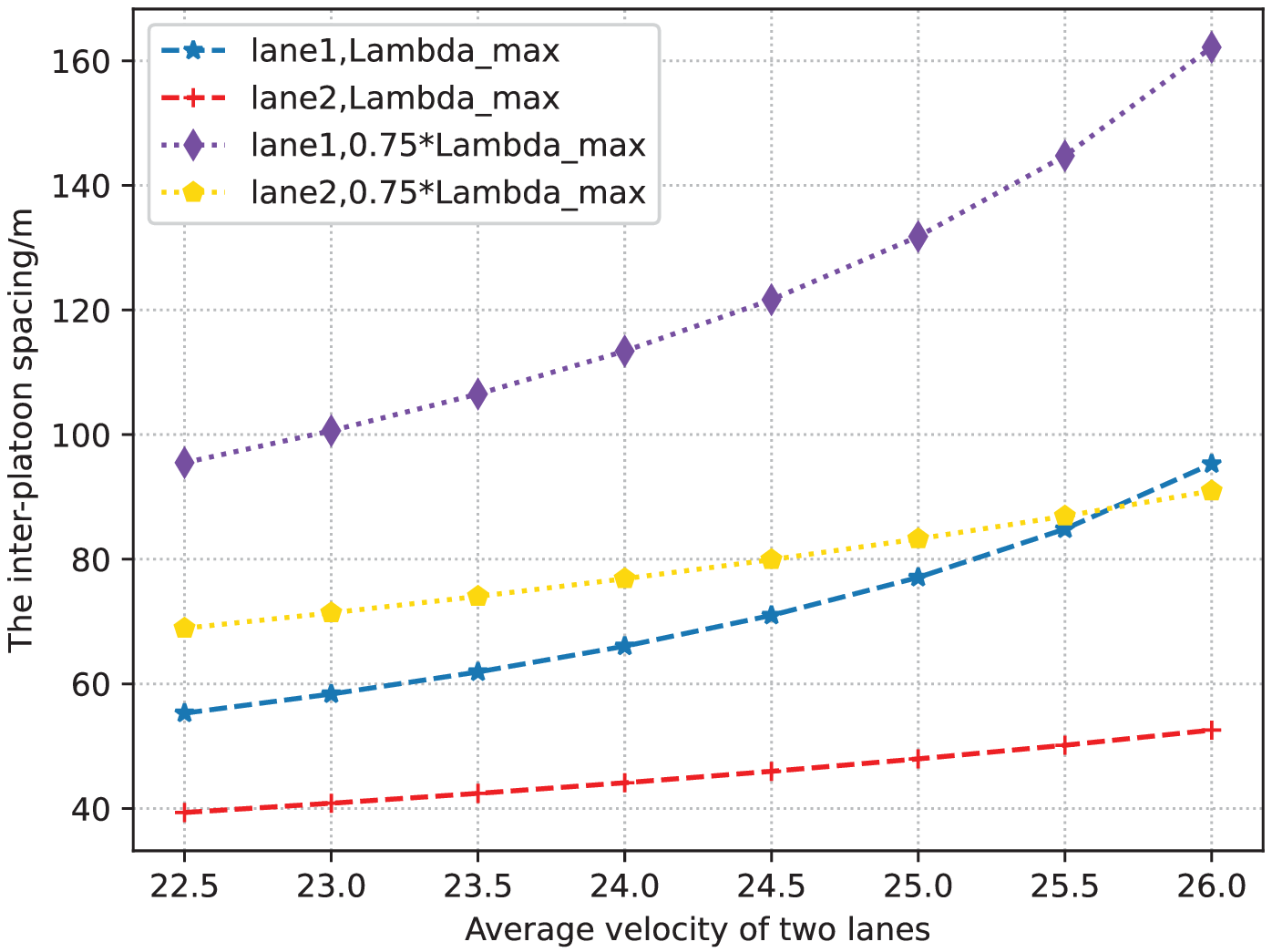}
    \end{minipage}}
  \subfigure[]{
    \begin{minipage}{5.5cm}
    \includegraphics[scale=0.4]{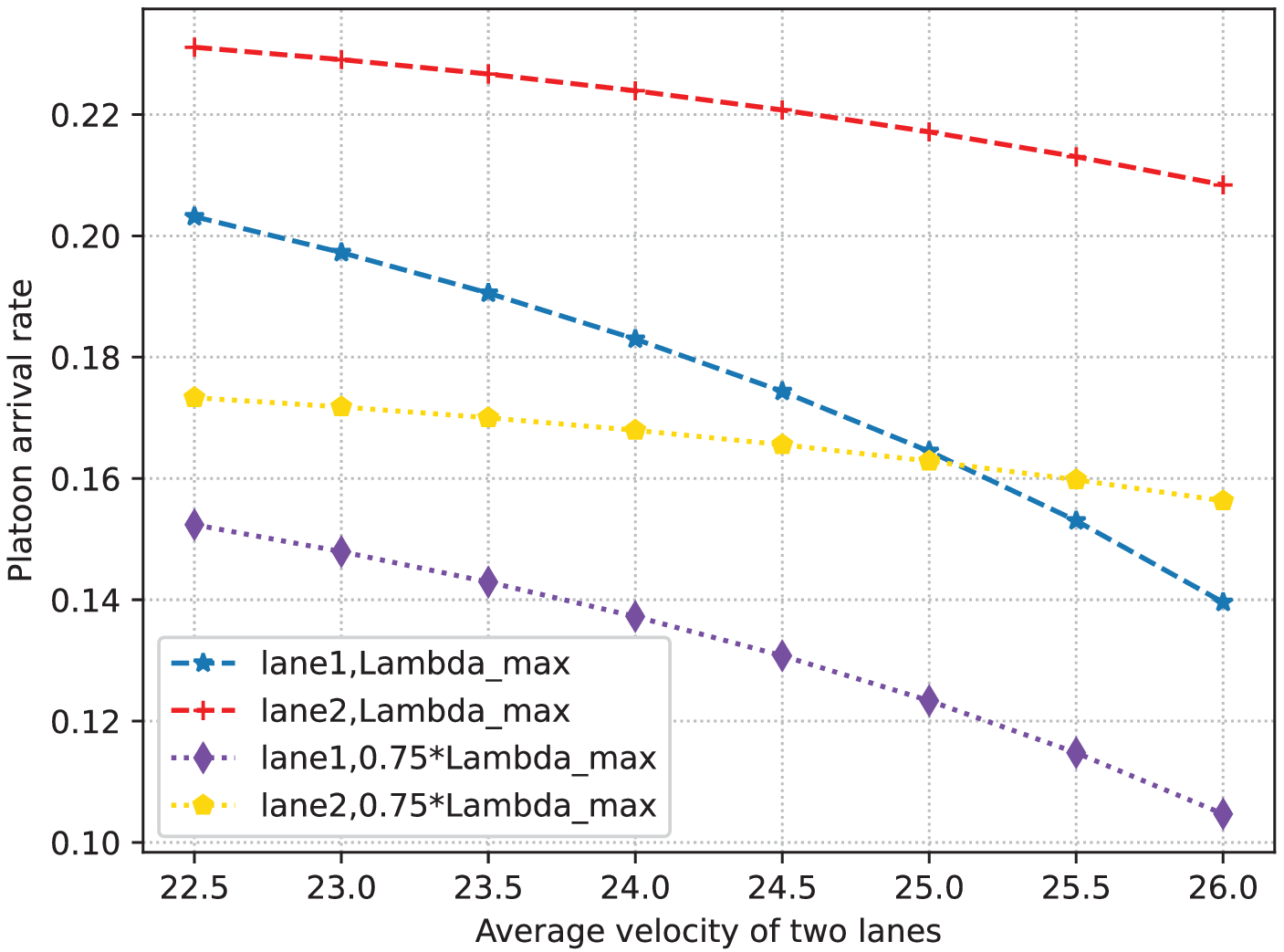}
    \end{minipage}}
    \caption{Platoon characteristics. (a) Intra-platoon spacing; (b) Inter-platoon spacing; (c) Platoon arrival rate.}
    \label{fig6}
    \vspace{-0.4cm}
\end{figure*}
\begin{figure*}
  \centering
  \subfigure[]{
    \begin{minipage}{5.5cm}
    \includegraphics[scale=0.4]{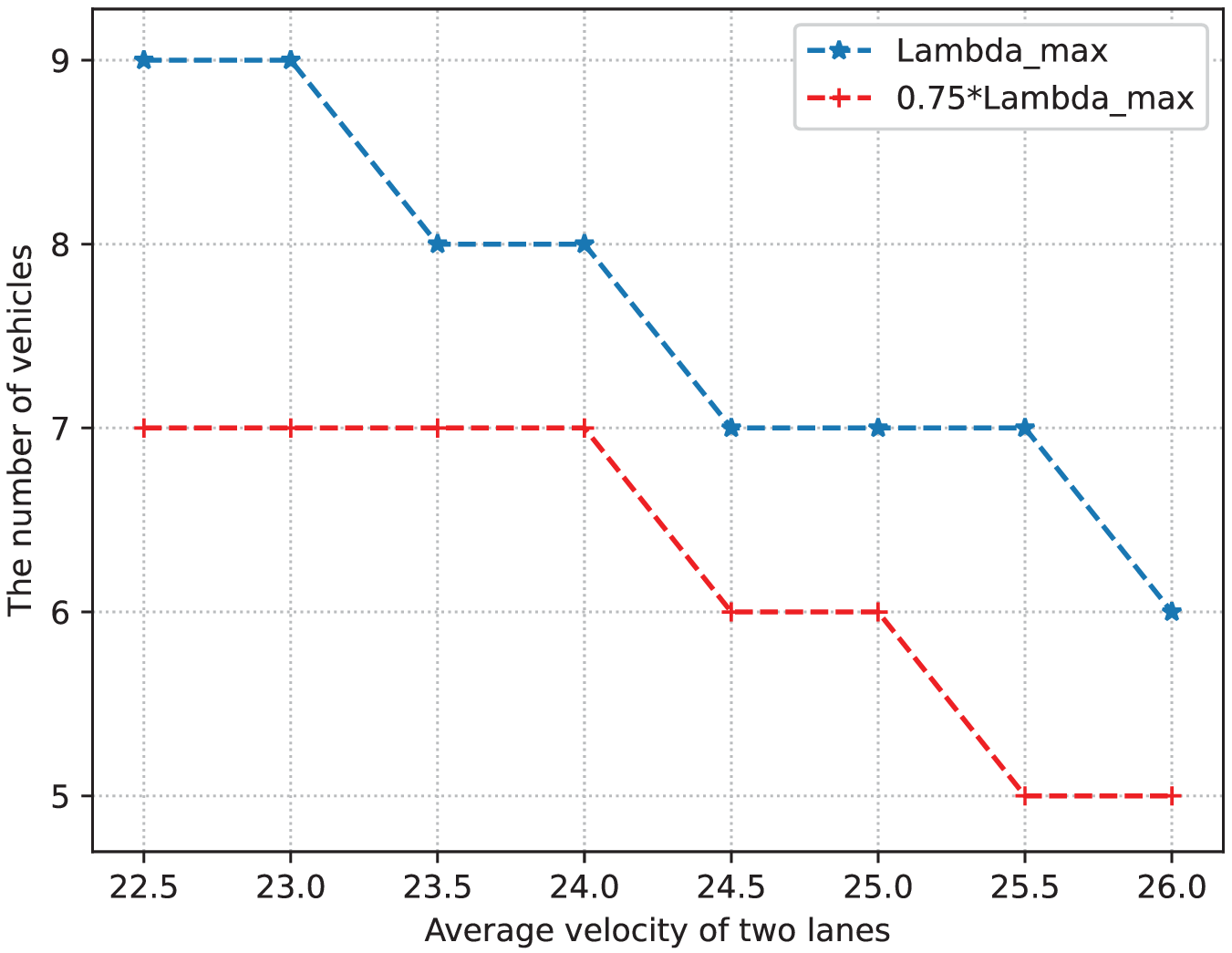}
    \end{minipage}}
  \subfigure[]{
    \begin{minipage}{5.5cm}
    \includegraphics[scale=0.4]{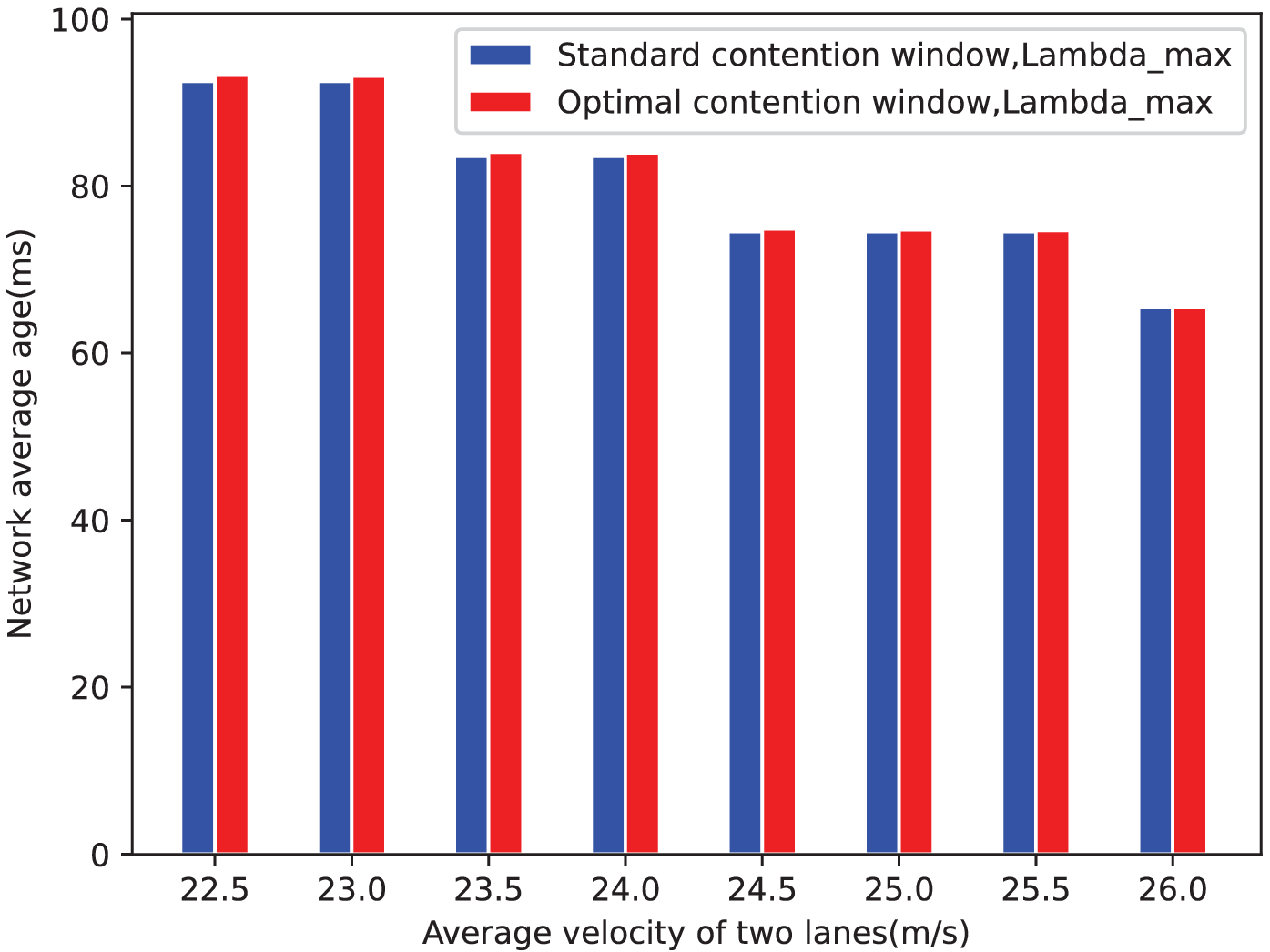}
    \end{minipage}}
  \subfigure[]{
    \begin{minipage}{5.5cm}
    \includegraphics[scale=0.4]{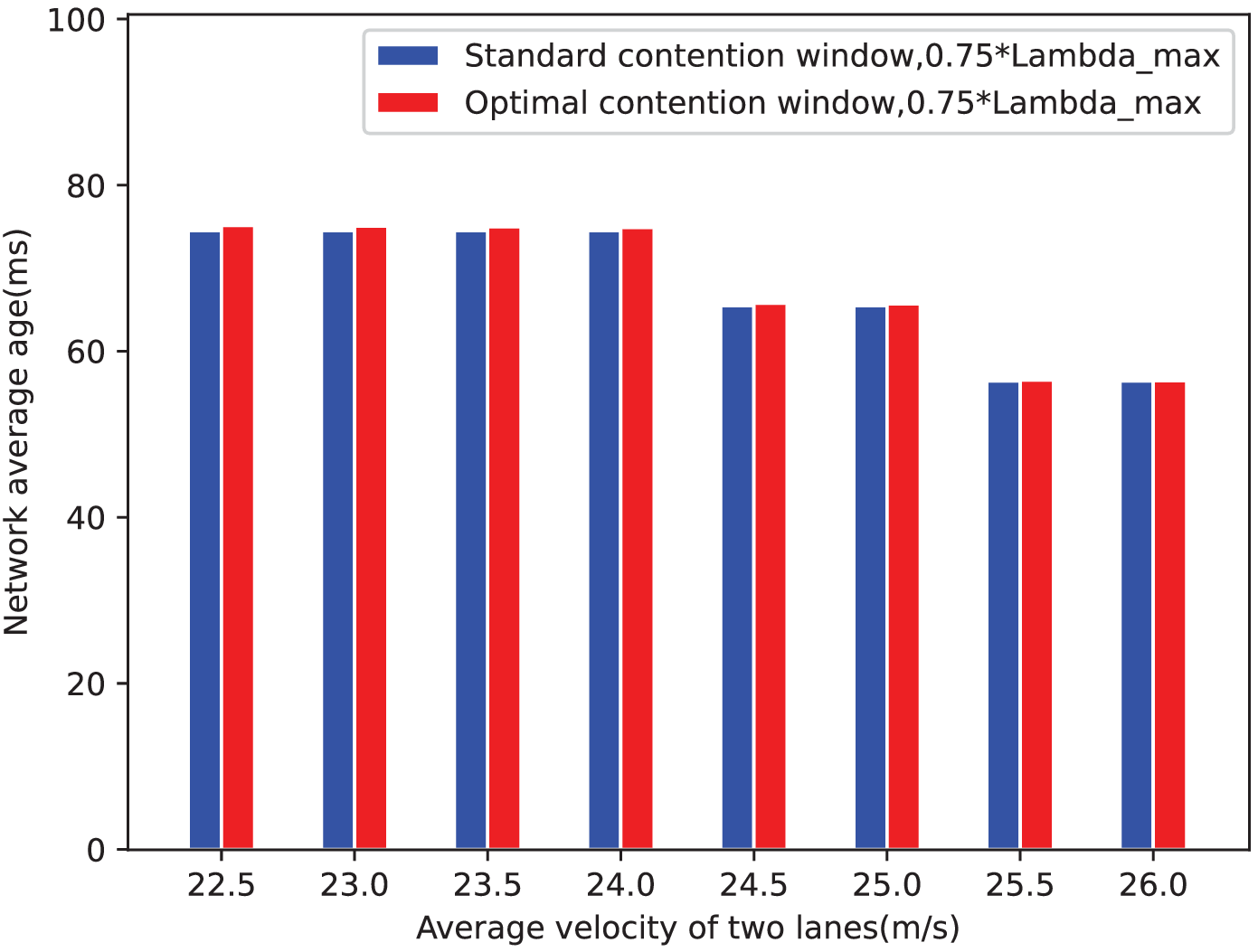}
    \end{minipage}}
    \caption{Total vehicles number and average age. (a) Number of vehicles; (b) Average age of data in network under $\lambda_{max}$; (c) Average age of data in network under $0.75\lambda_{max}$.}
    \label{fig7}
    \vspace{-0.4cm}
\end{figure*}

Fig. \ref{fig5} shows the relationship between the optimal minimum contention window and average velocity under different platoon arrival rates. We can see that the optimal minimum contention window decreases as the average velocity increases. This is because that vehicles with different velocities adjust the minimum contention window adaptively to keep the fairness index constant, which is consistent with Eq. \eqref{eq8}.

Figs \ref{fig6}-(a), (b) and (c) show the impact of the average velocity of the two lanes on the network performance in terms of intra-platoon spacing, inter-platoon spacing and platoon arrival rate. In Fig. \ref{fig6}-(a), the intra-platoon spacing is the same under different arrival rates on the same lane since the intra-platoon spacing is only related to the velocity of the vehicle, which is consist with Eq. \eqref{eq42}. Moreover, the intra-platoon spacing gradually increases as the average velocity increases. In Fig. \ref{fig6}-(b), the inter-platoon spacing achieves the minimum value when the platoon arrival rate is set as $\lambda=\lambda_{max}$, which is consistent with Eq. \eqref{eq45}. Moreover, the inter-platoon spacing is smaller with a larger platoon arrival rate. This is because that given the vehicle velocity and intra-platoon spacing, the inter-platoon spacing in Eq. \eqref{eq45} increases with the decrease of platoon arrival rate. In addition, the inter-platoon spacing gradually increases as the average velocity increases, which keeps consistent with Eq. \eqref{eq45}. In Fig. \ref{fig6}-(c), the platoon arrival rate decreases as the average velocity increases. This is because that the inter-and intra-platoon spacing increase with the increase of the vehicle velocity, thus leading to the decrease of the platoon arrival rate, which is consistent with Eq. \eqref{eq44}.

Fig. \ref{fig7}-(a) shows the impact of the average velocity on the number of vehicles in the network under different platoon arrival rates. Figs (b) and (c) show the impact of the average velocity on the average age of data in the network under different platoon arrival rates. In Fig. \ref{fig7}-(a), the number of vehicles in the network gradually decreases as the average velocity increases. This is because the inter-and intra-platoon spacing are gradually increasing as the average velocity increases, which results in a gradual decrease of the number of vehicles on each lane. In addition, for the same lane, the number of vehicles under platoon arrival rate $0.75\lambda_{max}$ is significantly less than that under $\lambda_{max}$. In Figs (b) and (c), the average age of data in the network under the optimal minimum contention windows is almost the same as that under the standard minimum contention window. However, the access of vehicles with different velocities under the optimal minimum contention windows is fair but the standard minimum contention window cannot achieve the fair access. In addition, the age decreases obviously with the increase of average velocity. This is because the number of vehicles in the network gradually decreases as the average velocity increases, which is consistent with Fig. \ref{fig7}-(a). In other words, the number of vehicles will have a significant impact on the data freshness in the network.
\begin{figure}[htbp]
\centering
\includegraphics[width=\linewidth, scale=1.00]{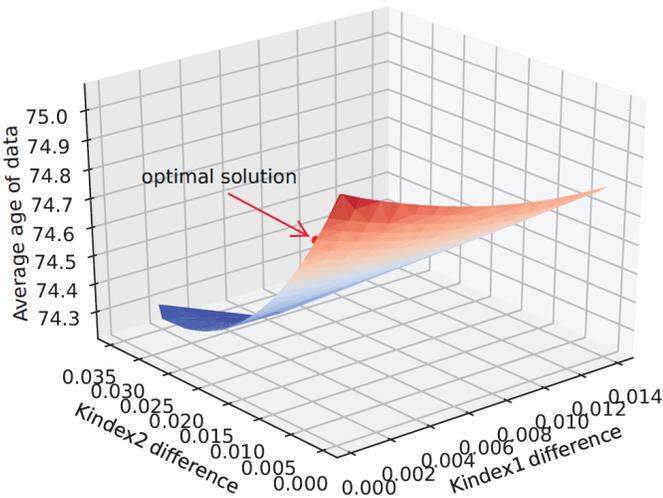}
\caption{Pareto surface}
\label{fig8}
\vspace{-0.4cm}
\end{figure}

Fig. \ref{fig8} shows the $pareto$ front surface through fitting the solutions in the $pareto$ set  under platoon arrival rate $\lambda_{max}$. The vehicle velocities on lane $1$ and lane $2$ are $26.5m/s$ and $22.5m/s$, respectively, thus the average velocity of two lanes is $24.5m/s$. It can be seen that fairness difference for each lane gradually increases when the average age of data in the network decreases. On the contrary, the average age of data in the network gradually increases when the fairness difference for each lane gradually decreases. Therefore, it is difficult to achieve the fairness access of the network and minimize the average age of data in the network at the same time. The optimal solution is shown by the red dot in Fig. \ref{fig8}. It is seen that the fairness difference is relatively small and the age is not very large for the optimal solution. The similar $pareto$ front surface can be obtained under different average velocities.

It is noted that when the inertia factor is 0.8 and the speed factors are 0.9 and 1.8, the global optimal solution can be obtained for most of the velocities, while the solution may fall in the local optima for only few velocities, i.e., 22.5m/s and 25.5m/s under the maximum arrival rate of the platoon, and 22.5m/s under the 0.75 maximum arrival rate. To solve the local optimal problem, we increase the inertia factor exponentially with increasing the number of iterations for the velocities. Specifically, the inertia factor is small in the early stage of the algorithm, thus the individual of the population can explore the solution efficiently. In this way, a relatively better solution can be obtained for the velocities.

As mentioned above, we can obtain the following conclusion from Fig. \ref{fig8}.
\begin{itemize}
\item
It is difficult to achieve the fairness access and minimize the average age of data in the network simultaneously given the velocity of vehicles.
\end{itemize}
This can be explained according to Lemma \ref{lemma1} and Lemma \ref{lemma2}. Based on Lemma \ref{lemma1}, the fairness access is achieved when ${v_1W_0^1 \approx v_2W_0^2 \approx \dots \approx v_NW_0^N \approx \overline{v}\overline{W}_0}$, while based on Lemma \ref{lemma2} the minimize the average age of data in the network is achieved when ${W_0^1 \approx W_0^2 \approx \dots \approx W_0^N \approx \overline{W}_0}$. However, the velocities of vehicles on different lanes are different, thus ${v_1 \neq v_2 \neq \dots v_N}$. Therefore, when the fairness access is achieved, ${W_0^1 \neq W_0^2 \neq \dots W_0^N}$. In this case, the minimum the average age of data in the network is difficult to be achieved. Therefore, it is difficult to achieve the fairness access and minimize the average age of data in the network simultaneously given the velocity of vehicles.

Moreover, we can obtain the following conclusion from Figs. \ref{fig4}, \ref{fig7}-(b) and \ref{fig7}-(c).
\begin{itemize}
\item The optimal minimum contention windows can achieve relative fair access while having the similar data freshness in the network as compared to standard minimum contention window.
\end{itemize}
This is because that the fairness index of a vehicle is determined by the minimum contention window of the vehicle according to Eq. \eqref{eq8} when the velocity of the vehicle is given. Thus, the fairness access can be achieved when the windows of vehicles are adjusted to be optimal contention windows. However, the standard minimum contention window cannot achieve the fairness access. Therefore, we can get part of the conclusion, the optimal contention windows can achieve the fairness access as compared to the standard minimum contention window. In addition, the data freshness in the network is related with $R_k/R_j$ according to Eq. \eqref{eq56} when average service rate of vehicles in the network $H_s$ is given, while $R_k$ and $R_j$ are inversely proportional to the minimum contention window $W_0^k-1$ and $W_0^j-1$ according to Eq. \eqref{eq59}, respectively, which leads to the data freshness in the network is related with $(W_0^k-1)/(W_0^j-1)$.Thus, the change of minimum contention windows has small impact on the data freshness in the network. In this case, the data freshness in the network under the optimal contention windows is similar with the data freshness under the standard minimum contention window. Therefore, we can get another part of the conclusion, i.e., the optimal contention windows can achieve similar data freshness in the network as compared to standard minimum contention window.

\section{Conclusion}
\label{sec7}
In this paper, we considered both fairness and freshness of data in the MEC-assisted platooning network, and designed a multi-objective optimal scheme to jointly optimize the access fairness and data freshness through adjusting the minimum contention windows of vehicles with different velocities. We formulated a multi-objective optimization problem and adopted the MoPSO algorithm to solve it and also got the optimal minimum contention windows of vehicles. According to theoretical analysis and the simulation results, we can get the conclusions as follows:
\begin{itemize}
\item Due to the minimum contention windows of vehicles on different lanes are difficult to be approximated under various velocities, it is difficult to achieve the fairness access and minimize the average age of data in the network simultaneously given the velocities of vehicles.
\item Since the fairness index and data freshness are both functions of the minimum contention window, and the change of the minimum contention window has a much greater impact on the fairness index than the data freshness, the optimal minimum contention windows can achieve fair access while having the similar data freshness in the network as compared to standard minimum contention window.
\end{itemize}

In the future, optimizing both the minimum contention window and average service rate of each vehicle may be an important direction so as to achieve the fairness access if the lane changing among multiple lanes with the same moving direction is allowed.

\appendix
\label{appxA}
In the appendix, we prove that the minimum average age of information of the network is achieved when all vehicles in the network adopt the almost same minimum contention window under the condition of the average minimum contention window of network is given. Since the summation of number of vehicles on each lane is equal to the number of vehicles in the network, Eq. \eqref{eq28} can be written as
\begin{equation}
\overline{\Delta} = \frac{1}{N_v} (\sum_{j=1}^{N_v} \frac{C(\boldsymbol{R}) }{R_j} + \frac{N_v }{C(\boldsymbol{R})} \sum_{j=1}^{N_v} \frac{R_j}{{H_j}^2}).
\label{eq53}
\end{equation}

Since $T_{slot}$ is much smaller than $T_s$, $R_k$ is much larger than $H_k$ for vehicle $k$ according to Eqs. \eqref{eq29}, \eqref{eq30} and \eqref{eq32}, thus $\frac{R_k}{H_k}$ is much larger than 1. Therefore, Eq. \eqref{eq25} can be approximated as
\begin{equation}
C(\boldsymbol{R})  \approx \sum_{k = 1}^{N_v} \frac{R_k}{H_k},
\label{eq54}
\end{equation}
Substituting Eq. \eqref{eq54} into \eqref{eq53}, Eq. \eqref{eq53} can be rewritten as
\begin{equation}
\overline{\Delta} \approx \frac{1}{N_v} (\sum_{j=1}^{N_v} \frac{\sum_{k = 1}^{N_v} \frac{R_k}{H_k} }{R_j} + \frac{N_v }{\sum_{k = 1}^{N_v} \frac{R_k}{H_k}} \sum_{j=1}^{N_v} \frac{R_j}{{H_j}^2}).
\label{eq55}
\end{equation}

Assuming all vehicles in the network are given the same average service rate, i.e., $H_i = H_s=\frac{1}{T_s}, {\forall i \in [1,2, \dots N_v]}$, Eq. \eqref{eq55} can be further simplified as
\begin{equation}
\overline{\Delta} \approx \frac{1}{H_s} (\sum_{j=1}^{N_v} \frac{\sum_{k = 1}^{N_v} R_k}{R_j} + 1).
\label{eq56}
\end{equation}

In Eq. \eqref{eq56}, ${\frac{\sum_{k = 1}^{N_v} R_k}{R_j}}$ can be written as ${(1+\sum_{k = 1,k \neq j}^{N_v}\frac{R_k}{R_j})}$, thus ${\sum_{j=1}^{N_v}\sum_{k = 1}^{N_v}\frac{R_k}{R_j}}$ can be rewritten as ${(N_v+\sum_{j=1}^{N_v}\sum_{k=1,k \neq j}^{Nv}\frac{R_k}{R_j})}$, here
\begin{equation}
\begin{aligned}
\sum_{j=1}^{N_v}\sum_{k=1,k \neq j}^{Nv}\frac{R_k}{R_j}&=\frac{R_2}{R_1}+\frac{R_3}{R_1}+\frac{R_4}{R_1}+ \dots +\frac{R_{N_v}}{R_1} \\
&+ \frac{R_1}{R_2}+\frac{R_3}{R_2}+\frac{R_4}{R_2}+ \dots +\frac{R_{N_v}}{R_2} \\
&\dots \\
&+ \frac{R_1}{R_{N_v}}+\frac{R_2}{R_{N_v}}+\frac{R_3}{R_{N_v}}+ \dots +\frac{R_{N_v-1}}{R_{N_v}},
\label{eq57}
\end{aligned}
\end{equation}
after reorganizing, we can get
\begin{equation}
\begin{aligned}
&\sum_{j=1}^{N_v}\sum_{k=1,k \neq j}^{Nv}\frac{R_k}{R_j}\\
&=(\frac{R_2}{R_1}+\frac{R_1}{R_2})+(\frac{R_3}{R_1}+\frac{R_1}{R_3})+\dots+(\frac{R_{N_v}}{R_1}+\frac{R_1}{R_{N_v}}) \\
&+(\frac{R_3}{R_2}+\frac{R_2}{R_3})+\dots+(\frac{R_{N_v}}{R_2}+\frac{R_{2}}{R_{N_v}}) \\
&\dots \\
&+(\frac{R_{N_v-1}}{R_{N_v}}+\frac{R_{N_v}}{R_{N_v-1}})
\label{eq58}
\end{aligned}
\end{equation}

According to the inequality properties, for any $j$ and $k$, $\frac{R_k}{R_j}+\frac{R_j}{R_k} \ge 2$. In the inequality, the equal sign is satisfied if and only if ${R_j=R_k}$. Therefore, according to Eq. \eqref{eq58}, we have $\sum_{j=1}^{N_v}\sum_{k=1,k \neq j}^{Nv}\frac{R_k}{R_j} \ge N_v(N_v-1)$. The equal sign is satisfied if and only if ${R_1=R_2=\dots=R_{N_v}}$. Substituting Eq. \eqref{eq30} into Eq. \eqref{eq29}, we have

\begin{equation}
R_j = \frac{2}{(W_0^j-1)\times T_{slot}}, \quad 1 \leq j \leq N_v,
\label{eq59}
\end{equation}

According to Eq. \eqref{eq59}, ${R_1=R_2=\dots=R_{N_v}}$ means ${W_0^1-1=W_0^2-1=\dots=W_0^{N_v}-1}$, i.e., $W_0^1=W_0^2=\dots=W_0^{N_v}$. Moreover $(W_0^1+W_0^2+\dots+W_0^{N_v})/N_v=\overline{W}_0$, thus the minimum age of information of the network is achieved when ${W_0^1 \approx W_0^2 \approx \dots \approx W_0^{N_v} \approx \overline{W}_0}$.


\ifCLASSOPTIONcaptionsoff
  \newpage
\fi


\end{document}